\theoremstyle{plain}
\newtheorem{theorem}{Theorem}[section]
\newtheorem{proposition}[theorem]{Proposition}
\theoremstyle{definition}
\newtheorem{definition}[theorem]{Definition}
\newtheorem{example}[theorem]{Example}
\theoremstyle{remark}
\newtheorem{remark}{Remark}
\begin{document}

\title[Algebraic characterization of regular fractions]{Algebraic characterization of regular fractions under level permutations}

\author{Fabio Rapallo}
\address{Department of Science and Technological Innovation, University of Piemonte Orientale, Alessandria, Italy}
\email{fabio.rapallo@uniupo.it}
\author{Maria Piera Rogantin}
\address{Department of Mathematics, University of Genova, Genova, Italy}
\email{rogantin@unige.it}

\begin{abstract}
In this paper we study the behavior of the fractions of a factorial design under permutations of the factor levels. We focus on the notion of regular fraction and we introduce methods to check whether a given symmetric orthogonal array can or can not be transformed into a regular fraction by means of suitable permutations of the factor levels. The proposed techniques take advantage of the complex coding of the factor levels and of some tools from polynomial algebra. Several examples are described, mainly involving factors with five levels.
\end{abstract}

\keywords{Algebraic statistics; complex coding; fractional factorial designs; indicator function; isomorphic fractions; orthogonal arrays; regular fractions}
\subjclass[2010]{62K15; 13P10}

\maketitle

\section{Introduction}

In Design of Experiments the use of fractions of a full factorial design plays an important role when the observation of the response variable at each of the possible level combinations of the factors is impracticable. In general, the selected level combinations must satisfy optimality criteria to conveniently measure the impact of factors and their interactions on the mean and on the variability of the response variable.
In this framework, orthogonal arrays and regular fractions are widely used and both are based on properties of orthogonality among factors. The notion of orthogonality has  two main interpretations that coincide in the two level case: vector orthogonality and combinatorial orthogonality. Vector orthogonality allows one to construct linear models with non correlated factor effect estimators, but this concept is relevant with quantitative factors. Combinatorial orthogonality easily applies both in qualitative and quantitative cases. The complex coding of factor levels, extensively studied in \cite{pistone|rogantin:08} for asymmetric and multilevel designs, get together vector and combinatorial interpretations.

In this paper we consider how the orthogonal properties of a fraction change in presence of permutations of the factor levels, and in particular
we give some methods to check if a given fraction with qualitative factors is or not isomorphic (or equivalent) to a regular fraction up to permutations of the factor levels. Regular fractions are special orthogonal arrays where the factors and their interactions are either orthogonal or totally confounded.

Two fractional factorial designs are said isomorphic if one of them can be obtained from the other by reordering the runs, relabeling the
factors and/or permuting the levels of one or more factors. If a factor is quantitative, only the reverse permutation is allowed. Usually, the isomorphism is referred as geometric when the factors are quantitative, and as combinatoric when the factors are qualitative. Mixed situations can occur in practice, see \cite{katsaounis:12}.

Different methods to check the isomorphism between fractions has been done in literature. \cite{clark|dean:01}, \cite{katsaounis|dean:08}, and \cite{katsaounis:12} give necessary and sufficient conditions for isomorphism, based on the Hamming distance between two fractions, while \cite{katsaounis:13} uses the singular value decomposition of the design matrix for binary designs. \cite{ma|fang|lin:01} and \cite{pang|liu:11} consider an approach using the centered $L_2$-discrepancy for geometric isomorphism. In the latter paper, an algorithm  with low complexity is presented. \cite{gromping:16} introduce a different definition of regularity based on the canonical correlation to analyze the problem. \cite{tang|xu:14} essentially deal with an inverse problem: to find factor level permutations of a regular fraction to reduce contamination of non negligible interactions on the estimation of linear effects without increasing the run size.

We emphasize that most of the algorithms in literature are specific for binary factors, and even when defined in general for multilevel factors they are exemplified in the three levels case. Symmetric qualitative designs with three factor levels can be partitioned into isomorphic to a regular and not isomorphic to a regular fraction, while with larger number of levels the situation is more difficult to analyze. In fact, up to the three levels case, the isomorphism with respect to a regular fraction can be detected directly using the complex coding of levels and the indicator function, as argued later in Section \ref{poly-sect}.

A necessary condition for isomorphism between two fractions is that they have the same Generalized Word Length Pattern (GWLP) and this condition does not depend on the level coding, see \cite{wu|hamada:00}. \cite{fontana|rapallo|rogantin:16} derive a formula, computationally easy and of clear interpretation, for computing the GWLP of a fraction, based on the mean aberration.

In this paper we use tools from  Algebraic statistics, and in particular the polynomial indicator function to specify a fraction and its orthogonality/regularity properties, the polynomial representation of the permutations of the factor levels, and the representation through Latin squares of orthogonal arrays of strength two. In this framework, the coding of the $s$ levels of a factor by the $s$-th roots of the unity is essential. The methodology introduced here to detect isomorphism between fractions applies to symmetric designs with $s$ prime and the given examples consider $s\ge 5$.

The paper is organized as follows. In Section \ref{sec:alg} some relevant results of the algebraic theory of fractional factorial designs are presented. In Section \ref{poly-sect} the polynomial representation of factor level permutations is analyzed. Using polynomial conditions on the coefficients of such a representation, an algorithm to check if two fractions are isomorphic is given and some examples, implemented in the symbolic software CoCoA-5, are shown. In particular fractions of a $5^3$ design are checked to be isomorphic or not to a regular one. In Section \ref{ls-sect} the investigation if a fraction is isomorphic to a regular one or not is approached using the Latin square representation of orthogonal arrays of strength 2. When the isomorphism exists, the relevant level permutations and the generating equations of the regular fraction are recovered, by exploiting the properties of the complex coding of the levels. Such a check is based on the nullity of all the $2 \times 2$ minors of the multi-layer Latin squares in the numeric complex field. Several examples are discussed to show how to actually apply the proposed method. Finally, some further developments are illustrated in Section \ref{future-sect}.

\section{Algebraic characterization of fractional designs} \label{sec:alg}

In this section we present some relevant results of the algebraic theory of fractional designs. The interested reader can find further information, including the proofs of the propositions, in \cite{fontana|pistone|rogantin:00}, \cite{pistone|rogantin:08}, \cite{fontana|rapallo|rogantin:16}.

Let us consider an experiment with $m$ factors observed at $s$ levels each with $s$ a prime number.

Let us code the $s$ levels of  the $s$-th roots of the unity  $\omega_k=\exp\left(\sqrt{-1}\:  \frac {2\pi}{s} \ k\right)$, $k=0,\ldots,s-1$. We denote the level set by $\Omega_{s}=\left\{\omega_0,\ldots,\omega_{{s}-1}\right\}$.

As $\alpha=\beta \mod s$ implies $\omega_k^\alpha = \omega_k^\beta$, it is useful to introduce the residue class ring $\mathbb Z_s$ and the notation $[k]_s$ for the residue of $k \mod s$. For integer $\alpha$, we obtain $(\omega_k)^\alpha = \omega_{[\alpha k]_s}$. We also have $\omega_h\omega_k=\omega_{[h+k]_s}$. We drop the sub-$s$ notation when there is no ambiguity.

We denote by ${\mathcal D}$  the full factorial design with complex coding: $\mathcal D = \Omega_s^m$;
the cardinality of the full factorial design is $\# {\mathcal D}=s^m$.

We denote by $L$ the exponent set of the complex coded design: $L  =\mathbb Z_s^m$.
Notice that $L$ is both  the  exponent set of the complex coded design and the integer coded design. The elements of $L$ are denoted in vector notation by $\alpha$, $\beta, \ldots$:
\begin{equation*}
L = \left\{ \alpha= (\alpha_1,\ldots,\alpha_m) : \alpha_j = 0,\ldots,s-1, j=1,\ldots,m\right\} \ ;
\end{equation*}
$[\alpha-\beta]$ is the $m$-tuple $\left(\left[\alpha_1-\beta_1 \right], \ldots, \left[\alpha_m - \beta_m\right] \right)$.

In order to use polynomials to represent all the functions defined over ${\mathcal D}$, we define
\begin{itemize}
\item [-] $X_j$, the $j$-th component function, which maps a point of ${\mathcal D}$ to its $j$-th component,
$
X_j \colon {\mathcal D} \ni (\zeta_1,\ldots,\zeta_m)\ \longmapsto \ \zeta_j
$.
The function $X_j$ is a \emph{simple term} or, by abuse of terminology, a \emph{factor}.
\item [-] $X^\alpha=X_1^{\alpha_1} \cdot \ldots \cdot X_m^{\alpha_m}$, $\alpha \in L =  {\mathbb Z}_s^m$ i.e., the monomial function
$
X^\alpha : {\mathcal D} \ni (\zeta_1,\ldots,\zeta_m)\ \mapsto \ \zeta_1^{\alpha_1}\cdot \ldots \cdot \zeta_m^{\alpha_m}
$.
The function $X^\alpha$ is an \emph{interaction term}. As $s$ is a prime number, the interaction $X^\alpha$ takes values in $\Omega_s$.
\end{itemize}

The set of monomials $\{X^\alpha: \alpha \in L\}$ is an orthonormal basis of all the complex functions defined over ${\mathcal D}$.

Since we will make use occasionally factors with a non-prime number of levels, the behavior of the factors and of the interactions is summarized below:
\begin{itemize}
 \item [-]
Let $X_i$ be a simple term with level set $\Omega_s$.  Let us define $h=s/\text{gcd}(r,s)$ and let $\Omega_h$ be the set
of the $h$-th roots of the unity. The term $X_i^r$ takes all the values of $\Omega_h$ equally often over ${\mathcal D}$.

 \item [-]
Let $X^\alpha=X_{j_1}^{\alpha_{j_1}} \cdots X_{j_k}^{\alpha_{j_k}}$ be an interaction term of order $k$ where
$X_{j_i}^{\alpha_{j_i}}$ takes values in $\Omega_{h_{j_i}} $. Let us define $h=\text{lcm}\{h_{j_1}, \ldots, h_{j_k}\}$.
The interaction $X^\alpha$ takes values in $\Omega_h$ equally often  over ${\mathcal D}$.
\end{itemize}

Let $\mathcal F$ be a subset of the full factorial design $\mathcal D$. We consider here only fractions without replications.

\begin{definition} \label{de:indicator}
The \emph{indicator function} $F$ of a fraction ${\mathcal F}$ is a complex polynomial defined over ${\mathcal D}$ such that for each $\zeta \in {\mathcal D}$, $F(\zeta)$ is equal to $1$ if $\zeta \in \mathcal F$ and equal to $0$ if $\zeta \in \mathcal D \setminus \mathcal F$ .
We denote by $b_\alpha$ the coefficients of the representation of $F$  on ${\mathcal D}$ using the monomial basis
$\{X^\alpha, \ \alpha \in L\}$:
\begin{equation} \label{indf}
F(\zeta) = \sum_{\alpha \in L} b_\alpha X^\alpha(\zeta), \;\zeta\in{\mathcal D}, \;  b_\alpha \in \mathbb C \ .
\end{equation}
\end{definition}

\begin{proposition} \label{pr:ceqn}

Let ${\mathcal F}$ be a fraction with indicator function $F$.
\begin{enumerate}
\item The coefficients $b_\alpha$ of  $F$  are given by:
\begin{equation} \label{calpha}
b_\alpha = \frac 1 {\#{\mathcal D}} \sum_{\zeta \in {\mathcal F}} X^{[-\alpha]}(\zeta) =
 \frac{1}{\#{\mathcal D}} \sum_{h=0}^{s-1} n_{\alpha,s-h} \ \omega_{h}
\end{equation}
where  $n_{\alpha,h}$ is the number of the occurrences of  $\omega_h$ in $\{X^{\alpha}(\zeta):\zeta \in {\mathcal F}\}$.

In particular $b_0= {\#\mathcal F}/{\#{\mathcal D}}$.

 \item The term $X^\alpha$ is centered on $\mathcal F$ if, and only if, $b_\alpha=b_{[-\alpha]}=0$.
 \item The terms $X^\alpha$ and $X^\beta$
are orthogonal on $\mathcal F$ if, and only if, $b_{[\alpha-\beta]}=0$;
 \item If $X^\alpha$ is centered then, for each $\beta$ and $\gamma$ such that $\alpha=[\beta-\gamma]$ or
 $\alpha=[\gamma-\beta]$, $X^\beta$ is orthogonal to $X^\gamma$.
\item \label{prime} Let $s$ be prime. Then, the term $X^\alpha$ is centered on $\mathcal F$ if, and only if, its $s$
levels appear equally often: $n_{\alpha,0} = \cdots = n_{\alpha,s-1}$.
\end{enumerate}
\end{proposition}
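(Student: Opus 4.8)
The plan is to derive everything from the stated orthonormality of the monomial basis $\{X^\alpha : \alpha \in L\}$ and then read off the centering and orthogonality statements as special cases. First I would establish the coefficient formula in item~(1). Writing the Hermitian inner product of two functions on $\mathcal D$ as $\langle f, g\rangle = \frac{1}{\#\mathcal D}\sum_{\zeta \in \mathcal D} f(\zeta)\overline{g(\zeta)}$, orthonormality means $\langle X^\alpha, X^\beta\rangle = \delta_{\alpha\beta}$, so expanding $F$ as in \eqref{indf} gives $b_\alpha = \langle F, X^\alpha\rangle = \frac{1}{\#\mathcal D}\sum_{\zeta \in \mathcal D} F(\zeta)\overline{X^\alpha(\zeta)}$. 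Since $F$ is the $0/1$ indicator of $\mathcal F$ and each $X^\alpha(\zeta)$ is a root of unity, the key identity $\overline{X^\alpha(\zeta)} = X^{[-\alpha]}(\zeta)$ collapses this to $b_\alpha = \frac{1}{\#\mathcal D}\sum_{\zeta \in \mathcal F} X^{[-\alpha]}(\zeta)$, the first equality in \eqref{calpha}. Grouping the summands by the value $\omega_h = X^\alpha(\zeta)$, using $\overline{\omega_h} = \omega_{[s-h]}$, and reindexing $h \mapsto s-h$ produces the second equality; taking $\alpha = 0$ gives $b_0 = \#\mathcal F/\#\mathcal D$.

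For items~(2)--(4) I would use two observations. First, because $F$ takes real values, conjugating its expansion and matching coefficients yields $b_\alpha = \overline{b_{[-\alpha]}}$, so $b_\alpha = 0$ if and only if $b_{[-\alpha]} = 0$. Second, the pointwise product identity $X^\alpha(\zeta)\overline{X^\beta(\zeta)} = X^{[\alpha-\beta]}(\zeta)$ holds. Item~(2): $X^\alpha$ is centered precisely when $\sum_{\zeta \in \mathcal F} X^\alpha(\zeta) = 0$, and by item~(1) this sum equals $\#\mathcal D\, b_{[-\alpha]}$, so centering is equivalent to $b_{[-\alpha]} = 0$, hence to $b_\alpha = b_{[-\alpha]} = 0$. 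Item~(3): orthogonality of $X^\alpha$ and $X^\beta$ on $\mathcal F$ means $\sum_{\zeta \in \mathcal F} X^{[\alpha-\beta]}(\zeta) = 0$, which by the same computation is $b_{[\beta-\alpha]} = 0$, equivalently $b_{[\alpha-\beta]} = 0$. Item~(4) is then immediate: if $X^\alpha$ is centered and $\alpha = [\beta-\gamma]$ or $\alpha = [\gamma-\beta]$, then $b_{[\beta-\gamma]}$ equals $b_\alpha$ or $b_{[-\alpha]}$, both zero, so item~(3) gives orthogonality of $X^\beta$ and $X^\gamma$.

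The substantive part is item~(5), and I expect the crux to be the \emph{only if} direction, where primality of $s$ is essential. By item~(2) and the first form of \eqref{calpha}, $X^\alpha$ is centered if and only if $\sum_{h=0}^{s-1} n_{\alpha,h}\,\omega_h = 0$. The \emph{if} direction is routine: when all $n_{\alpha,h}$ coincide, the sum is a multiple of $\sum_{h=0}^{s-1}\omega_h = 0$. For the converse I would invoke the irreducibility of the cyclotomic polynomial: for $s$ prime, $1 + x + \cdots + x^{s-1}$ is the minimal polynomial of the primitive root $\omega_1$ over $\mathbb Q$, so $1, \omega_1, \ldots, \omega_1^{s-2}$ are linearly independent over $\mathbb Q$. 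Writing $\omega_h = \omega_1^h$ and eliminating $\omega_1^{s-1} = -(1 + \omega_1 + \cdots + \omega_1^{s-2})$ turns the relation into $\sum_{h=0}^{s-2}(n_{\alpha,h} - n_{\alpha,s-1})\,\omega_1^h = 0$ with integer coefficients; linear independence forces $n_{\alpha,h} = n_{\alpha,s-1}$ for every $h$, which is the claim. I would flag that primality cannot be dropped, since for composite $s$ the $s$-th roots of unity satisfy nontrivial integer relations (already $\omega_0 + \omega_2 = 0$ when $s = 4$), so an uncentered term can still have a vanishing mean.
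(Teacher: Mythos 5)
Your proposal is correct and complete; note that the paper itself does not prove this proposition but defers to the cited references (Fontana--Pistone--Rogantin 2000, Pistone--Rogantin 2008), where the argument is precisely the one you give: read $b_\alpha$ off as a Fourier coefficient with respect to the orthonormal character basis, use $\overline{X^\alpha}=X^{[-\alpha]}$ and $b_\alpha=\overline{b_{[-\alpha]}}$ for items (2)--(4), and invoke irreducibility of $1+x+\cdots+x^{s-1}$ for prime $s$ in item (5). Your closing remark that primality cannot be dropped is a worthwhile addition, consistent with the paper's restriction of statement (5) to prime $s$.
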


As stated in the proposition above, the coefficients $b_\alpha$ encode interesting properties of the fraction such as
orthogonality among factors and interactions, and get together the combinatorial and vectorial orthogonality.

A common choice to select an experiment is to use orthogonal array. The assumption that interactions above a specified order are not present is translated into a combinatorial property on the frequency of the levels in the fraction. Let us denote with $\text{OA}(n,s^m,t)$ a symmetric  orthogonal array with $n$ rows and $m$ columns,  in which each column has $s$ symbols, and with strength $t$, as defined e.g. in \cite{wu|hamada:00}. Strength $t$ means that for every selection of $t$ columns of the matrix, all the elements of $\Omega_s^t$ appear equally often in the $t$ columns.

\begin{definition} \label{de:projectivity}
Let $I$ be a non-empty subset of $\{1,\ldots,m\}$. A fraction $\mathcal F$ {\em factorially projects} on the $I$-factors if the projection of ${\mathcal F}$ on the $I$-factors is a full factorial design where each point appears $k$ times. A fraction $\mathcal F$ is an {\em orthogonal array} of strength $t$ if it factorially projects on any $I$-factors with $\#I=t$.
\end{definition}

The proposition below shows a connection between the combinatorial definition of orthogonal array introduced above and the coefficients of the indicator function in Eq. \eqref{indf}.

\begin{proposition}  \label{pr:projectivity}
A fraction is an \emph{orthogonal array of strength $t$}  if, and
only if, all the coefficients of the indicator function up to the order $t$ are zero:
\[ b_{\alpha}=0 \quad \forall \ \alpha  \textrm{ of order up to }t, \ \alpha \ne (0, \ldots ,0)\ .
\]
\end{proposition}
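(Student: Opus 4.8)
The plan is to translate the combinatorial balance condition of Definition~\ref{de:projectivity} into the vanishing of the Fourier coefficients of a projected counting function, and then to identify those coefficients with the $b_\alpha$ of \eqref{calpha}. Recall that the \emph{order} of $\alpha$ is the cardinality of its support $I=\{j:\alpha_j\neq 0\}$. For $I\subseteq\{1,\ldots,m\}$ and $\omega_{k_I}\in\Omega_s^I$, let $n_I(\omega_{k_I})$ count the $\zeta\in\mathcal F$ whose $I$-components equal $\omega_{k_I}$; then $\mathcal F$ factorially projects on the $I$-factors exactly when $n_I$ is constant on $\Omega_s^I$.

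The bridge I would set up is the identity
\begin{equation} \label{eq:proj-fourier}
b_\alpha=\frac{1}{s^m}\sum_{\zeta\in\mathcal F}X^{[-\alpha]}(\zeta)=\frac{1}{s^m}\sum_{\omega_{k_I}\in\Omega_s^I}n_I(\omega_{k_I})\prod_{j\in I}\omega_{k_j}^{[-\alpha_j]},
\end{equation}
valid whenever $\operatorname{supp}\alpha\subseteq I$: it merely regroups the monomial values $X^{[-\alpha]}(\zeta)$ by their $I$-profile, since $X^{[-\alpha]}$ depends only on the $I$-components. Thus, up to the constant $s^{m-\#I}$, the numbers $\{b_\alpha:\operatorname{supp}\alpha\subseteq I\}$ are exactly the Fourier coefficients of $n_I$ on $\Omega_s^I$.

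For the forward implication I would assume strength $t$ and fix $\alpha\neq 0$ of order at most $t$ with support $I$. By strength $t$, $\mathcal F$ projects factorially on some set of $t$ factors containing $I$, hence by restriction on the $I$-factors, so $n_I\equiv k$ is constant; then \eqref{eq:proj-fourier} factorizes as $b_\alpha=\frac{k}{s^m}\prod_{j\in I}\bigl(\sum_{r=0}^{s-1}\omega_r^{[-\alpha_j]}\bigr)$. Each $j\in I$ has $\alpha_j\neq 0$, so $[-\alpha_j]\not\equiv 0\pmod s$; as $s$ is prime the exponent is coprime to $s$, the map $r\mapsto\omega_r^{[-\alpha_j]}$ permutes $\Omega_s$, and the inner sum is $\sum_r\omega_r=0$, giving $b_\alpha=0$. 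For the converse I would assume all $b_\alpha$ with $\alpha\neq 0$ of order $\le t$ vanish and fix $I$ with $\#I=t$: every nonzero $\gamma$ supported on $I$ has order $\le t$, so $b_\gamma=0$, and by \eqref{eq:proj-fourier} every nonconstant Fourier coefficient of $n_I$ vanishes. Since the monomials form an orthonormal basis of the functions on $\Omega_s^I$, inversion forces $n_I$ to equal its mean $\#\mathcal F/s^t$; as $I$ is arbitrary, $\mathcal F$ is an orthogonal array of strength $t$.

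I expect the main obstacle to be organizational rather than analytic: keeping the support relation $\operatorname{supp}\alpha\subseteq I$ correctly aligned both with the ``order $\le t$'' hypothesis and with the passage between sets $I$ of size exactly $t$ (as in Definition~\ref{de:projectivity}) and the possibly smaller supports of the relevant $\alpha$, which is precisely what the restriction step in the forward direction handles. Once \eqref{eq:proj-fourier} and the coprimality consequence of $s$ being prime are in hand, both directions collapse to the vanishing of sums of nontrivial $s$-th roots of unity and to Fourier inversion, both routine.
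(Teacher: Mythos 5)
Your proof is correct, and it follows the standard argument (the one given in the references \cite{fontana|pistone|rogantin:00} and \cite{pistone|rogantin:08} to which the paper defers, since Proposition \ref{pr:projectivity} is stated here without proof): identify the $b_\alpha$ supported on $I$ with the Fourier coefficients of the projected counting function $n_I$, kill the nontrivial ones by the vanishing of $\sum_r \omega_r^{c}$ for $c\not\equiv 0$, and recover constancy of $n_I$ by Fourier inversion. The only points worth noting are that the forward direction implicitly uses $t\le m$ so that a $t$-set containing $\operatorname{supp}\alpha$ exists, and that the vanishing of the inner sums actually needs only $[-\alpha_j]\not\equiv 0 \pmod s$ (a geometric series), not the primality of $s$; both are harmless.
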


\begin{definition} \label{wlp1}
Given an interaction $X^\alpha$ defined on a fraction ${\mathcal F}$ of the full factorial design ${\mathcal D}$, its \emph{aberration}, or degree of aliasing, $a_\alpha$ is given by the real number
\begin{equation*}
a_\alpha=  \frac{ \|b_{\alpha}\|_2^2 }{b_{0}^2}
\end{equation*}
where $\| x \|_2^2$ is the square of the Euclidean norm of the complex number $x$.

The GWLP $A({\mathcal F})=(A_1({\mathcal F}),\ldots,A_m({\mathcal F}))$ of a fraction ${\mathcal F}$  is defined as
\begin{equation*}
A_j({\mathcal F})= \sum_{\|\alpha \|_0 =j}  a_{\alpha} \quad j=1,\ldots,m \ ,
\end{equation*}
where $\|\alpha \|_0$ is the number of non-null elements of $\alpha$, i.e., the order of interaction of $X^\alpha$.
\end{definition}

The following proposition allows us to compute all the aberrations $a_\alpha$ without using complex numbers.
\begin{proposition}
Let $X^\alpha$ be a simple or interaction term with values in $\Omega_{t}$. Its aberration $a_\alpha$ is
\begin{equation*}
a_\alpha=\frac{1}{n^2}\left( \sum_{k=0}^{t-1}  \cos\left(\frac{2\pi}{t} k \right) \sum_{i=0}^{t-1} n_{\alpha,i}
n_{\alpha,[i-k]}  \right) \, .
\end{equation*}
\end{proposition}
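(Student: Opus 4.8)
The plan is to start from the defining formula $a_\alpha = \|b_\alpha\|_2^2 / b_0^2$ in Definition \ref{wlp1} and feed in the expression for $b_\alpha$ supplied by Proposition \ref{pr:ceqn}. First I would record that, since $X^\alpha$ takes values in $\Omega_t$, grouping the sum $b_\alpha = \frac{1}{\#{\mathcal D}}\sum_{\zeta\in{\mathcal F}} X^{[-\alpha]}(\zeta)$ according to the value attained by $X^\alpha$ yields
\begin{equation*}
b_\alpha = \frac{1}{\#{\mathcal D}}\sum_{i=0}^{t-1} n_{\alpha,i}\,\overline{\omega_i},
\end{equation*}
where now $\omega_i = \exp(\sqrt{-1}\,\tfrac{2\pi}{t}\,i)$ are the $t$-th roots of unity, using $X^{[-\alpha]} = \overline{X^\alpha}$ together with $\overline{\omega_i} = \omega_{[-i]}$. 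Since $b_0 = \#{\mathcal F}/\#{\mathcal D} = n/\#{\mathcal D}$ (with $n = \#{\mathcal F}$), the factor $(\#{\mathcal D})^{-2}$ cancels between numerator and denominator, leaving
\begin{equation*}
a_\alpha = \frac{1}{n^2}\Bigl|\sum_{i=0}^{t-1} n_{\alpha,i}\,\omega_i\Bigr|^2.
\end{equation*}

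Next I would expand the squared modulus as a double sum and use the identity $\omega_i\overline{\omega_j} = \omega_{[i-j]}$ recorded in Section \ref{sec:alg}:
\begin{equation*}
\Bigl|\sum_i n_{\alpha,i}\,\omega_i\Bigr|^2 = \sum_{i=0}^{t-1}\sum_{j=0}^{t-1} n_{\alpha,i}\, n_{\alpha,j}\,\omega_i\overline{\omega_j} = \sum_{i,j} n_{\alpha,i}\, n_{\alpha,j}\,\omega_{[i-j]}.
\end{equation*}
Reindexing with $k = [i-j]$, so that $j = [i-k]$ runs over all residues as $k$ does for each fixed $i$, rewrites this as $\sum_{k=0}^{t-1}\omega_k\, S_k$ with $S_k = \sum_{i=0}^{t-1} n_{\alpha,i}\, n_{\alpha,[i-k]}$, which is already the shape of the target formula except that $\omega_k$ still carries an imaginary part.

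Finally I would discard that imaginary part. The aberration is real by construction, so it equals the real part of the last expression; writing $\omega_k = \cos(\tfrac{2\pi}{t}k) + \sqrt{-1}\,\sin(\tfrac{2\pi}{t}k)$, one checks the symmetry $S_k = S_{[-k]}$ by substituting $i\mapsto[i+k]$ in the sum defining $S_{[-k]}$, and since $\sin(\tfrac{2\pi}{t}(t-k)) = -\sin(\tfrac{2\pi}{t}k)$ the imaginary contributions cancel in the pairs $k$ and $t-k$, leaving only the cosine terms and hence the claimed identity. I expect the only delicate point to be bookkeeping rather than depth: keeping straight that the relevant roots of unity are the $t$-th roots attached to the actual range $\Omega_t$ of $X^\alpha$ (not the ambient $s$-th roots), and verifying the symmetry $S_k = S_{[-k]}$ that legitimizes passing to the purely real cosine form.
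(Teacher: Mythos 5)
Your argument is correct and complete: reducing $a_\alpha$ to $\frac{1}{n^2}\bigl|\sum_i n_{\alpha,i}\omega_i\bigr|^2$, expanding the squared modulus into the convolution $\sum_k \omega_k S_k$ with $S_k=\sum_i n_{\alpha,i}n_{\alpha,[i-k]}$, and passing to the real part (justified either by the symmetry $S_k=S_{[-k]}$ or simply by the reality of $|z|^2$) is exactly the intended derivation. The paper states this proposition without proof, deferring to the cited references (in particular Fontana--Rapallo--Rogantin 2016, where the aberration--convolution formula originates), and your computation reproduces that standard argument, including the correct handling of the $t$-th rather than $s$-th roots of unity.
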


Regular fractions are a subset of the orthogonal arrays. In a regular fraction any two
simple or interaction terms are either orthogonal or totally confounded.

\begin{definition} \label{def:reg}
A fraction $\mathcal F$ is \emph{regular} if there exist a sup-group $\mathcal L$ of $L$,
a group homomorphism $e$ from $\mathcal L$ to $\Omega_s$ such that the set of equations
\begin{equation} \label{eq:def-eq}
\{X^\alpha = \omega_{e(\alpha)} :   \alpha \in \mathcal L\}
\end{equation}
defines the fraction $\mathcal F$. If ${\mathcal H}$ is a minimal generator of the group ${\mathcal L}$, the set of equations $\{X^\alpha = {\omega_{e(\alpha)}} : \alpha \in {\mathcal H}\}$ is called the set of \emph{defining equations} of $\mathcal F$.
\end{definition}

\begin{proposition} \label{th:reg}
Let $\mathcal F$ be a fraction. The following statements are equivalent:
\begin{enumerate}
 \item \label{it:reg} The fraction $\mathcal F$ is regular according to Definition \ref{def:reg}.
  \item \label{it:reg-F} The indicator function of the fraction has the form
\begin{equation*}
F(\zeta)=\frac 1 {\#{\mathcal L}} \sum_{\alpha \in \mathcal L} \overline{\omega_{e(\alpha)}}\ X^\alpha(\zeta) \qquad \zeta \in \mathcal D \ .
\end{equation*}
where $\mathcal L$ is a given subset of $L$ and $e: \mathcal L \to \Omega_s$ is a given mapping.

 \item \label{it:reg-con} For each $\alpha, \beta \in  L$ the interactions represented on $\mathcal F$ by the
 terms $X^\alpha$  and $X^\beta$ are either orthogonal or totally aliased.
\end{enumerate}
\end{proposition}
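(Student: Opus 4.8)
The plan is to establish the cycle of implications $(\ref{it:reg}) \Rightarrow (\ref{it:reg-F}) \Rightarrow (\ref{it:reg-con}) \Rightarrow (\ref{it:reg})$. For $(\ref{it:reg}) \Rightarrow (\ref{it:reg-F})$ I would start from the defining equations $\{X^\alpha = \omega_{e(\alpha)} : \alpha \in \mathcal{L}\}$ and show that the candidate polynomial $G(\zeta) = \frac{1}{\#\mathcal{L}} \sum_{\alpha \in \mathcal{L}} \overline{\omega_{e(\alpha)}}\, X^\alpha(\zeta)$ is the indicator function of $\mathcal{F}$. The key observation is that for a fixed $\zeta = (\omega_{k_1},\ldots,\omega_{k_m})$ the map $\alpha \mapsto X^\alpha(\zeta) = \omega_{[\sum_j \alpha_j k_j]}$ is a character of $\mathcal{L}$; writing $X^\alpha(\zeta) = \omega_{\phi_\zeta(\alpha)}$, the difference $\psi = \phi_\zeta - e$ is a group homomorphism from $\mathcal{L}$ into $\mathbb{Z}_s$, so that $G(\zeta) = \frac{1}{\#\mathcal{L}} \sum_{\alpha \in \mathcal{L}} \omega_{\psi(\alpha)}$. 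Because the image $\psi(\mathcal{L})$ is a subgroup of $\mathbb{Z}_s$ whose elements are attained equally often, this sum equals $\#\mathcal{L}$ when $\psi \equiv 0$ (that is, exactly when $\zeta \in \mathcal{F}$) and vanishes otherwise, since a sum over all roots of unity of a fixed order greater than $1$ is zero. The primality of $s$ makes the dichotomy transparent, as $\mathbb{Z}_s$ has only the trivial and the full subgroup.

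For $(\ref{it:reg-F}) \Rightarrow (\ref{it:reg-con})$, since $\{X^\alpha : \alpha \in L\}$ is an orthonormal basis the representation is unique, so I would read off $b_\alpha = \frac{1}{\#\mathcal{L}} \overline{\omega_{e(\alpha)}}$ for $\alpha \in \mathcal{L}$ and $b_\alpha = 0$ for $\alpha \notin \mathcal{L}$. Fix $\beta,\gamma \in L$ and put $\delta = [\beta-\gamma]$. By Proposition \ref{pr:ceqn} the terms $X^\beta$ and $X^\gamma$ are orthogonal on $\mathcal{F}$ precisely when $b_\delta = 0$, i.e. when $\delta \notin \mathcal{L}$. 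In the complementary case $\delta \in \mathcal{L}$, I would combine $X^\beta = X^\gamma X^\delta$ with the constancy $X^\delta = \omega_{e(\delta)}$ on $\mathcal{F}$ — which follows because $F(\zeta)=1$ at each $\zeta \in \mathcal{F}$ forces all the unit-modulus summands in $(\ref{it:reg-F})$ to equal $1$ — to conclude that $X^\beta$ is a constant unit multiple of $X^\gamma$ on $\mathcal{F}$, that is, the two terms are totally aliased. This realizes the dichotomy required by $(\ref{it:reg-con})$.

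For $(\ref{it:reg-con}) \Rightarrow (\ref{it:reg})$, I would recover the group and the homomorphism from the fraction itself by applying the hypothesis with $X^\gamma = X^0 = 1$: for every $\alpha$, either $X^\alpha$ is orthogonal to the constant term, which by Proposition \ref{pr:ceqn} means $b_\alpha = 0$, or $X^\alpha$ is totally aliased with the constant and hence constant on $\mathcal{F}$. I then set $\mathcal{L} = \{\alpha \in L : X^\alpha \text{ is constant on } \mathcal{F}\}$ and, since $s$ is prime so that $X^\alpha$ takes values in $\Omega_s$, I define $e(\alpha)$ by $X^\alpha = \omega_{e(\alpha)}$ on $\mathcal{F}$ for $\alpha \in \mathcal{L}$. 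Using $X^{[\alpha+\beta]} = X^\alpha X^\beta$ and $X^{[-\alpha]} = \overline{X^\alpha}$ it is routine to check that $\mathcal{L}$ is a subgroup of $L$ and that $e$ is a group homomorphism into $\Omega_s$. The coefficient formula \eqref{calpha} then gives $b_\alpha = b_0 \overline{\omega_{e(\alpha)}}$ for $\alpha \in \mathcal{L}$ and $b_\alpha = 0$ otherwise, so $F = b_0 \sum_{\alpha \in \mathcal{L}} \overline{\omega_{e(\alpha)}}\, X^\alpha$.

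The main obstacle is to finish this last step, since the equations $\{X^\alpha = \omega_{e(\alpha)} : \alpha \in \mathcal{L}\}$ are satisfied by every point of $\mathcal{F}$ but their solution set $\mathcal{G}$ could a priori be strictly larger. I expect to close the gap by a counting argument: evaluating the formula for $F$ at any $\zeta \in \mathcal{F}$ gives $1 = F(\zeta) = b_0\,\#\mathcal{L}$, whence $\#\mathcal{F} = \#\mathcal{D}/\#\mathcal{L}$, while the already-proved implication $(\ref{it:reg}) \Rightarrow (\ref{it:reg-F})$, applied to the regular fraction $\mathcal{G}$, yields $\#\mathcal{G} = \#\mathcal{D}/\#\mathcal{L}$. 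Since $\mathcal{F} \subseteq \mathcal{G}$ and the cardinalities agree, $\mathcal{F} = \mathcal{G}$, so $\mathcal{F}$ is regular with subgroup $\mathcal{L}$ and homomorphism $e$, completing the cycle.
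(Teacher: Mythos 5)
The paper does not prove Proposition \ref{th:reg} itself but defers to \cite{fontana|pistone|rogantin:00} and \cite{pistone|rogantin:08}, where the argument is essentially the one you give: the character-sum evaluation of the candidate indicator for $(\ref{it:reg})\Rightarrow(\ref{it:reg-F})$, the equality case of the triangle inequality for $(\ref{it:reg-F})\Rightarrow(\ref{it:reg-con})$, and the reconstruction of $\mathcal L$ and $e$ from the terms confounded with the constant for $(\ref{it:reg-con})\Rightarrow(\ref{it:reg})$. Your proof is correct, and in particular you identified and closed the one delicate point, namely that the solution set $\mathcal G$ of the extracted defining equations could a priori strictly contain $\mathcal F$, which your cardinality comparison $\#\mathcal F=\#\mathcal D/\#\mathcal L=\#\mathcal G$ rules out.
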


Finally, we recall two basic definitions of isomorphic fractions. For details see e.g. \cite{deanetal:15}.

\begin{definition}
Two fractions are
\begin{itemize}
\item[-] \emph{combinatorially isomorphic} if one can be obtained from the other by reordering the runs, relabeling the factors and/or {\emph switching} the levels of one or more factors.

\item[-] \emph{geometrically isomorphic} if one design can be obtained from the other by reordering the runs, relabeling the factors and/or {\emph reversing} the level order of one or more factors
\end{itemize}
\end{definition}

The combinatorial isomorphism pertains to qualitative factors, while geometric isomorphism pertains to quantitative factors. In this paper we focus on qualitative factors mainly.

From the definition of the indicator function, it follows immediately that a reordering of the runs does not affect the polynomial representation of the indicator function. Moreover, a relabeling of the factors simply permutes the subscripts. From the expression of the indicator function, it is relatively simple to find the relevant relabeling. Therefore, the most interesting task in analyzing the isomorphism of fractions is to study the behavior of the fractions under permutations of the factor levels.

\section{Polynomial representation of the factor level permutations} \label{poly-sect}

In this section first we give an account of the polynomial representation of the level permutations for a single factor, then we extend this representation to several factors, and finally we use such characterization to check the combinatorial isomorphism of two fractions on some examples.

A level permutation is a function from $\Omega_s$ to $\Omega_s$ and it always admits a polynomial representation. In special cases, such polynomial reduces to a monomial, and therefore a permutation of the levels does not affect the regularity of a fraction.

\begin{proposition} \label{pr:recod}
A regular fraction is transformed into a regular fraction by the group of transformations generated by the following level permutations on the factor $X_j$:
\begin{enumerate}
\item \label{pr:recod1} Cyclical permutations $X_j  \rightarrow \omega_k X_j$ with $k=0,\dots, s-1$.
\item \label{pr:recod2} Power permutations, for $s$ prime, $X_j \rightarrow \omega_k X_j^h$  with $h=1,\dots,s-1$.
\end{enumerate}
Permutations of type $1.$ and $2.$, produce $s(s-1)$ permutations on the factor $X_j$, and produce the following transformed monomial on the term $X^{\alpha}$:
\begin{equation}\label{eq:perm}
\prod_{j=1}^m
\omega_{[\alpha_j k_j]}X_j^{[\alpha_j h_j]}.
\end{equation}
\end{proposition}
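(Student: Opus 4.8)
The plan is to prove the three bundled claims in turn: that the maps in \eqref{pr:recod1} and \eqref{pr:recod2} are genuine permutations of $\Omega_s$, that together they generate exactly $s(s-1)$ of them, that their effect on an arbitrary interaction term is the monomial \eqref{eq:perm}, and finally that any such transformation carries a regular fraction to a regular one. Only the last claim is substantial.

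First I would record the action on the level set. Writing a level as $\omega_\ell$, the cyclical map $X_j\to\omega_k X_j$ sends $\omega_\ell\mapsto\omega_k\omega_\ell=\omega_{[k+\ell]}$, a cyclic shift and hence a bijection; the power map $X_j\to\omega_k X_j^h$ sends $\omega_\ell\mapsto\omega_{[h\ell+k]}$ using $(\omega_\ell)^h=\omega_{[h\ell]}$. The latter is a bijection of $\Omega_s$ precisely when $\ell\mapsto[h\ell+k]$ is a bijection of $\mathbb Z_s$, which, since $s$ is prime, holds for every $h\in\{1,\dots,s-1\}$. Thus on the exponents both families act as the affine maps $\ell\mapsto[h\ell+k]$ with $h\in\mathbb Z_s\setminus\{0\}$ and $k\in\mathbb Z_s$; these are closed under composition and inversion (the product of nonzero residues is nonzero modulo a prime), so they form the affine group of $\mathbb Z_s$, of order exactly $s(s-1)$, which gives the count. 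Applying $X_j\mapsto\omega_{k_j}X_j^{h_j}$ to $X^\alpha=\prod_j X_j^{\alpha_j}$ and using $(\omega_{k_j})^{\alpha_j}=\omega_{[\alpha_j k_j]}$ together with the reduction $X_j^{h_j\alpha_j}=X_j^{[\alpha_j h_j]}$ (legitimate because $X_j$ takes values in $\Omega_s$ with $s$ prime) yields \eqref{eq:perm} directly.

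For regularity I would work with a single arbitrary monomial transformation $T$ of the above form, which already covers the whole generated group since preservation of regularity is closed under composition and the inverse of $T$ is again of this form. The key observation is that $T$ permutes monomials up to roots of unity: for each $\alpha$ one has $X^\alpha\circ T^{-1}=c_\alpha\,X^{\alpha'}$, where $\alpha'=([\alpha_1\tilde h_1],\dots,[\alpha_m\tilde h_m])$ for the multipliers $\tilde h_j$ of $T^{-1}$, the exponent map $\alpha\mapsto\alpha'$ is a group automorphism of $L=\mathbb Z_s^m$ because each $\tilde h_j$ is a unit, and $c_\alpha=\omega_{[\sum_j\alpha_j\tilde k_j]}$ is a root of unity. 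I would then transform the defining equations. If $\mathcal F$ is regular, by Definition \ref{def:reg} it is cut out by $\{X^\alpha=\omega_{e(\alpha)}:\alpha\in\mathcal L\}$ with $\mathcal L$ a subgroup and $e$ a homomorphism; a point $\eta$ lies in $T(\mathcal F)$ iff $T^{-1}(\eta)\in\mathcal F$, that is iff $c_\alpha X^{\alpha'}(\eta)=\omega_{e(\alpha)}$ for all $\alpha\in\mathcal L$. Rewriting these as $X^{\alpha'}=\omega_{e'(\alpha')}$ and letting $\alpha$ range over $\mathcal L$ exhibits $T(\mathcal F)$ as the regular fraction with defining equations indexed by the image subgroup $\mathcal L'=\{\alpha':\alpha\in\mathcal L\}$, a subgroup because the exponent map is an automorphism. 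The same substitution applied to the indicator function reproduces the form of Proposition \ref{th:reg}, item \ref{it:reg-F}.

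The step I expect to be the crux is checking that the new label map $e'$ is again a group homomorphism $\mathcal L'\to\Omega_s$, and not merely a map. Transporting along the inverse automorphism $\phi\colon\alpha'\mapsto\alpha$, one gets $e'(\alpha')=[e(\phi(\alpha'))-\sum_j\ell_j\,\alpha'_j]$ for suitable constants $\ell_j\in\mathbb Z_s$, so $e'$ is a $\mathbb Z_s$-linear combination of the homomorphism $e\circ\phi$ and the linear functional $\alpha'\mapsto\sum_j\ell_j\alpha'_j$, hence itself a homomorphism. This is exactly where the affine (monomial) nature of the permutations enters: because $c_\alpha$ is a character of $L$, multiplying $\omega_{e(\alpha)}$ by $\overline{c_\alpha}$ preserves the homomorphism property, whereas a genuinely polynomial permutation would spread a single monomial over several and destroy this structure, which is precisely why the general case treated later in the paper is harder.
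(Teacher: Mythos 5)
Your argument is correct, but it proves the regularity claim by a different route than the paper. The paper does not give a self-contained proof at all: it defers to the reference \cite{pistone|rogantin:08} and only records the one-line observation that a monomial permutation merely permutes the coefficients $b_\alpha$ of the indicator function and multiplies them by roots of unity, so the absolute values $\|b_\alpha\|$ are unchanged; regularity then follows from the characterization in item \eqref{it:reg-con} of Proposition \ref{th:reg} (every pair of terms orthogonal or totally aliased, equivalently $\|b_\alpha\|\in\{0,b_0\}$ for all $\alpha$). You instead work directly from Definition \ref{def:reg}, transporting the defining equations $\{X^\alpha=\omega_{e(\alpha)}:\alpha\in\mathcal L\}$ along $T^{-1}$ and checking that the image index set $\mathcal L'$ is again a subgroup and the new label map $e'$ is again a homomorphism because the correction term $\sum_j \ell_j\alpha'_j$ is $\mathbb Z_s$-linear. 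Both routes are sound, and your computations (the affine group of $\mathbb Z_s$ giving the count $s(s-1)$, the reduction of exponents mod $s$, the closure of the generated group under composition and inversion) all check out. What the paper's coefficient-based observation buys is brevity and the fact that it transparently shows the GWLP is invariant as well; what your equation-based argument buys is an explicit description of the transformed fraction's defining subgroup $\mathcal L'$ and label map $e'$, which is closer in spirit to the constructive use the paper later makes of generating equations in Section \ref{ls-sect}. Your closing remark correctly isolates why the monomial form is essential: a general polynomial permutation would not send a single character $X^\alpha$ to a scalar multiple of another character, and the homomorphism structure would be lost.
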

The proof of Prop. \ref{pr:recod} is in \cite{pistone|rogantin:08}. We observe only that under monomial permutations the absolute values of the indicator function coefficients do not change, so that a regular fraction is transformed into a regular fraction.

We highlight again that, for factors with two or three levels, all the level permutations have a monomial representation.

\begin{remark}
From Prop. \ref{pr:recod} the monomial representation of a geometric isomorphism follows. In fact, for quantitative factors, the two admissible level permutations are both in monomial form: $Y=X$ (the identity) and $Y=\omega_{s-1}X^{s-1}$ (the reversing of the factor levels).
\end{remark}

In the reminder of this section we characterize the polynomial representation of the permutations for general multilevel factors. Such characterization leads to a criterion to actually check if a given fraction may be or may be not transformed into a regular fraction after permutation of the levels of one or more factors.

Let $X$ be a factor with $s$ levels and let $\pi$ be a permutation of the level set $\Omega_s$. We denote by $Y$ the transformed factor, $Y=\pi(X)$. The polynomial representation of $Y$ is
\begin{equation}
Y = \sum_{h=0}^{s-1} u_h X^h \qquad u_h \in {\mathbb C}
\end{equation}
with $\pi(\omega_i)=\sum_{h=0}^{s-1} u_h \omega_i^h$.
The $u_h$'s  coefficients  are the solutions of the linear system
\begin{equation} \label{trans}
\left(\begin{array}{cccc} \omega_0^0 & \omega_1^0 & \ldots & \omega_{s-1}^0 \\
\omega^0_1 & \omega_1^1 & \ldots & \omega^{s-1}_1 \\
\vdots &  \vdots & & \vdots \\
\omega^0_{s-1} & \omega^1_{s-1} & \ldots & \omega^{s-1}_{s-1}
\end{array}\right) \left(\begin{array}{c} u_0 \\
u_1 \\
\vdots \\
u_{s-1}
\end{array}\right) = \left(\begin{array}{c} \pi(\omega_0) \\
\pi(\omega_1) \\
\vdots \\
\pi(\omega_{s-1})
\end{array}\right) \, .
\end{equation}
The matrix appearing in Eq. \ref{trans} is a Vandermonde matrix $V$. If we denote by $v_{h+1,k+1}$, with $h,k=0,\dots,s-1$,
the generic element of $V$, it is known that the inverse $V^{-1}$ of $V$ has generic element $v_{h+1,k+1}^{-1}/s$. In our case, from the results in Section \ref{sec:alg}:
\[
v_{h+1,k+1} = \omega_{h}^{k} = \omega_{[hk]} \quad \textrm{ and } \quad v_{h+1,k+1}^{-1} = \frac 1 s \omega_{[s-hk]}
\]
and the resulting system is
\begin{equation} \label{transinv}
\left(\begin{array}{c} u_0 \\
u_1 \\
\vdots \\
u_{s-1}
\end{array}\right) =  \frac 1 s
\left(\begin{array}{cccc} \omega_0 & \omega_0 & \ldots & \omega_0 \\
\omega_0 & \omega_{s-1} & \ldots & \omega_{1} \\
\vdots &  \vdots & & \vdots \\
\omega_0 & \omega_1 & \ldots & \omega_{s-1}
\end{array}\right) \left(\begin{array}{c} \pi(\omega_0) \\
\pi(\omega_1) \\
\vdots \\
\pi(\omega_{s-1})
\end{array}\right) \, .
\end{equation}

Full details on the Vandermonde matrices for the roots of the unity, their properties and applications to complex interpolations can be found in, e.g., \cite{corless|fillion:13}.

Combining the expression of $V^{-1}$ with the fact that $\pi$ is a permutation,we obtain constraints on the coefficients $u_0, \ldots, u_{s-1}$ as in the following proposition.

\begin{proposition} \label{prima}
 The coefficients $u_i$'s must satisfy the following equations:
\begin{itemize}
\item[(i)] $u_0 = 0$;

\item[(ii)] for all $q=2, \ldots, {s-1}$,
\begin{equation} \label{const-pow}
\sum_{h_1,\ldots,h_{s-1}=0}^{s-1} u_{h_1}\dots u_{h_{q-1}}u_{[-h_1 \ldots -h_{q-1}]} = 0 \, ;
\end{equation}

\item[(iii)] given a permutation $\pi$, we have $\sum_{h=1}^{s-1} u_h = \pi(\omega_0)$, and therefore

\item[(iv)] $(\sum_{h=1}^{s-1} u_h)^s-1 = 0$\, .
\end{itemize}
\end{proposition}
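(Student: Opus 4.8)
The plan is to read all four conditions off the single explicit formula for the coefficients furnished by the inverse Vandermonde system \eqref{transinv}, namely
\[
u_h = \frac{1}{s}\sum_{k=0}^{s-1}\omega_{[-hk]}\,\pi(\omega_k),
\]
combined with the one structural fact not yet exploited: since $\pi$ is a \emph{permutation} of $\Omega_s$, the list $\pi(\omega_0),\ldots,\pi(\omega_{s-1})$ is merely a reordering of $\omega_0,\ldots,\omega_{s-1}$, so $\sum_{k}\pi(\omega_k)^q=\sum_{j}\omega_j^q$ for every integer exponent $q$. Condition (i) is then immediate: putting $h=0$ kills the factor $\omega_{[-hk]}=1$, whence $u_0=\frac1s\sum_k\pi(\omega_k)=\frac1s\sum_{j=0}^{s-1}\omega_j=0$, the last equality being the vanishing of the sum of all $s$-th roots of unity.

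The heart of the argument is (ii), and the idea I would use is to recognise the multi-index sum as a coefficient of a power of $Y$. Writing $Y=\sum_{h=0}^{s-1}u_hX^h$, expanding $Y^q$, and reducing exponents modulo $s$ via $X^s=1$, the coefficient of the constant monomial $X^0$ is exactly
\[
\sum_{\substack{h_1,\ldots,h_q\\ [h_1+\cdots+h_q]=0}}u_{h_1}\cdots u_{h_q}
=\sum_{h_1,\ldots,h_{q-1}=0}^{s-1}u_{h_1}\cdots u_{h_{q-1}}\,u_{[-h_1-\cdots-h_{q-1}]},
\]
which is the left-hand side of \eqref{const-pow}. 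Since $\{X^\alpha\}$ is an orthonormal basis of the functions on the one-factor design $\Omega_s$, the coefficient of $X^0$ in any function equals its mean over $\Omega_s$; hence the sum in question is $\frac1s\sum_{k}\pi(\omega_k)^q$. Using that $\pi$ is a permutation, this equals $\frac1s\sum_{j=0}^{s-1}\omega_j^q=\frac1s\sum_{j=0}^{s-1}\omega_{[jq]}$, and because $s$ is prime with $1\le q\le s-1$ the map $j\mapsto[jq]$ is a bijection of $\mathbb Z_s$, so the sum collapses again to $\frac1s\sum_{l=0}^{s-1}\omega_l=0$. This proves (ii); the same computation at $q=1$ recovers (i) as the special case.

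Finally, (iii) and (iv) follow by evaluating the polynomial at a single point. Setting $X=\omega_0=1$ gives $\pi(\omega_0)=Y(1)=\sum_{h=0}^{s-1}u_h=\sum_{h=1}^{s-1}u_h$, the last step using $u_0=0$ from (i); this is (iii). Raising to the $s$-th power and using that $\pi(\omega_0)\in\Omega_s$ is an $s$-th root of unity, so $\pi(\omega_0)^s=1$, yields $\bigl(\sum_{h=1}^{s-1}u_h\bigr)^s-1=0$, which is (iv). The only non-mechanical step is the identification in (ii) of the convolution sum with the constant coefficient of $Y^q$, and hence with the mean of $Y^q$ over $\Omega_s$; I expect this to be the main obstacle, since it is where the multi-index bookkeeping must be matched against the monomial-basis representation. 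Once it is in place, everything reduces to the two elementary facts that the sum of all $s$-th roots of unity vanishes and that $\pi$ permutes those roots.
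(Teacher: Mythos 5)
Your proof is correct and follows essentially the same route as the paper's: both read (i), (iii), (iv) off the Vandermonde interpolation system and both establish (ii) by identifying the left-hand side of \eqref{const-pow} with the constant coefficient of $Y^q$ modulo $X^s-1$, which vanishes because the values of $Y^q$ over $\Omega_s$ sum to zero when $\pi$ permutes the roots of unity. The only inessential divergences are that you obtain (iii) by evaluating $Y$ at the node $\omega_0$ rather than by summing the inverse-Vandermonde formula, and that your justification of $\sum_j\omega_j^q=0$ via the bijection $j\mapsto[jq]$ assumes $\gcd(q,s)=1$, whereas the paper also treats non-prime $s$ (a gap that closes immediately since $\sum_{j=0}^{s-1}\omega_j^q=0$ for any $q$ not divisible by $s$ by the geometric-series formula).
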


\begin{proof}
Let $\pi$ be a permutation of $\Omega_s$. For item $(i)$, observe that the first row in the system (\ref{transinv}) leads to
\[
u_0 = \frac 1 p  \sum_{i=0}^{s-1} \pi(\omega_i) =  \frac 1 p \sum_{i=0}^{s-1} \omega_i = 0 \, .
\]
The constraints in item $(ii)$ are derived in the same way, but using the powers $Y^2, \ldots ,Y^{s-1}$. For $Y^2$, the term of degree zero is $\sum_{h=0}^{s-1} u_iu_{s-i}$ which is the left-hand side in Eq. (\ref{const-pow}) for $q=2$ and in the same way one writes the corresponding degree zero terms for $q=3, \ldots, s-1$. Now, if $s$ is prime, all the powers $Y^2, \ldots, Y^{s-1}$ are permutations of $\Omega_s$ and the result follows from item $(i)$. For general $s$ (not prime necessarily), define $m=\gcd(q,s)$ and note that $Y^q$ contains $m$ times all the elements of the set $\Omega_{[s/m]}$, whose sum is again zero.
For item $(iii)$ it is enough to write
\[
\sum_{h=1}^{s-1} u_h = \sum_{h=0}^{s-1} \frac 1 s \sum_{k=0}^{s-1} \omega_k^{-h} \pi(\omega_k) =  \frac 1 s \sum_{k=0}^{s-1} \pi(\omega_k) \sum_{h=0}^{s-1} \omega_{[-kh]} =  \]
where the inner sum is always equal to zero except for $k=0$, and thus
\[
= \frac 1 s \left( s \pi(\omega_0) \right) = \pi(\omega_0) \, .
\]
Finally, item $(iv)$ follows from $(iii)$ by noting that the value of $\pi(\omega_0)$ may take any values in $\Omega_s$.
\end{proof}

It is interesting to write explicitly the equations in items $(i)$, $(ii)$, and $(iv)$ for the first values of $s$.

For $s=2$, we have $u_0=0$ and $u_1^2-1=0$, and such two equations characterize the only two possible permutations of $\Omega_2$. The same holds for $s=3$, where we obtain
\[
u_0=0 \, ; \qquad \qquad u_1u_2 = 0 \, ; \qquad \qquad (u_1+u_2)^3-1=0\, .
\]
From the second equation, we conclude that one among $u_1$ and $u_2$ is zero, providing an alternative proof to the fact that all the level permutations have a monomial representation for factors with two or three levels.

We illustrate now an example with $s=4$, i.e., a non-prime $s$. The conditions are:
\[ u_0=0 \, ; \qquad \qquad u_2^2+2u_1u_3 = 0 \, ;
\]
\[ u_2(u_1^2+u_3^2) = 0 \, ;  \qquad \qquad (u_1+u_2+u_3)^4-1=0\, . \]
When $s=4$ not all the monomial maps of the form $Y=\omega_hX^k$, $h=0,\ldots, 3$, $k=1,\ldots, 3$ are the polynomial representation of a permutation. Take for example the monomial map $Y=X^2$. This correspond to the transformation with coefficients $u_0=u_1=u_3=0$ and $u_2=1$. This is not a solution of the above equations, since the second equation is not satisfied.

When $s$ increases the situation becomes computationally less simple, since from Prop. \ref{prima} there are $s$ polynomial equations with degrees $1, \ldots, s$. Therefore, the degree of the polynomial system is $s!$, which is exactly the number of the permutations of $\Omega_s$.

To check if the system has a finite number of solutions one can apply the finiteness theorem (see Appendix A). It is enough to compute a Gr\"obner basis of the ideal generated by the $s$ equations in Prop. \ref{prima} and check if all the terms $u_0^{c_0}, \ldots , u_{s-1}^{c_{s-1}}$ are all leading terms of polynomials in the Gr\"obner basis for some exponents $c_0, \ldots, c_s$.

\begin{example}
For $s=5$ the system in Eq. \eqref{transinv} yields $5$ equations, and the Gr\"obner basis of the corresponding polynomial ideal is formed by $28$ polynomials. Among them, the five polynomials displayed below have leading term of the form \verb"u[i]^c[i]" for appropriate exponents \verb"c[i]" for all \verb"i", and therefore  the finiteness theorem in Appendix A applies. Then the polynomial system has a finite number of solutions.

{\footnotesize\begin{verbatim}
u[0],
u[1]^5 +u[2]^5 +u[3]^5 +(-20)*u[1]^3*u[3]*u[4] +
       +(-20)*u[1]*u[2]*u[4]^3 +u[4]^5 +(-1),
u[2]^6 +(23)*u[1]^4*u[4]^2 +(16)*u[1]^2*u[3]*u[4]^3 +
       +(19)*u[3]^2*u[4]^4 +(20)*u[2]*u[4]^5 +(-1)*u[2],
u[3]^6 +(48)*u[1]*u[3]^3*u[4]^2 +(26)*u[2]^3*u[4]^3 +
       +(-81)*u[1]^2*u[4]^4 +(27)*u[3]*u[4]^5 +(-1)*u[3],
u[4]^11 +(-12628/625)*u[1]^3*u[3]*u[4]^2 +(-77/625)*u[1]*u[3]^2*u[4]^3 +
        +(-12639/625)*u[1]*u[2]*u[4]^4 +(-121/625)*u[4]^6 +(-504/625)*u[4]
\end{verbatim}}
\end{example}

Finally, an interesting property of the coefficients $u_i$'s concern their expression in terms of the roots of the unity.

\begin{proposition}
Up to the constant $1/s$, the coefficients $u_i$ are integer non-negative combinations of the $s$-th roots of the unity:
\begin{equation*}
u_h = \sum_{r=0}^{s-1} v_r \omega_{r} \, ,  \qquad v_{r} \in {\mathbb N}
\end{equation*}
When the number of levels $s$ is prime, for all permutations $\pi$, such representation of the coefficients is unique up to an additive integer constant.
\end{proposition}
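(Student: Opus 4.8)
The plan is to read the coefficients $u_h$ straight off the inverse Vandermonde system \eqref{transinv}, obtain the non-negative integer representation by a direct counting argument, and then handle uniqueness for prime $s$ via the irreducibility of the cyclotomic polynomial. For existence, I would first write each coefficient explicitly: the entry of the matrix in \eqref{transinv} in row $h$ and column $k$ is $\omega_{[-hk]}$, so
\[
u_h = \frac{1}{s}\sum_{k=0}^{s-1}\omega_{[-hk]}\,\pi(\omega_k).
\]
Since $\pi$ permutes $\Omega_s$, I would write $\pi(\omega_k)=\omega_{\sigma(k)}$ for the associated permutation $\sigma$ of $\{0,\dots,s-1\}$, and then use $\omega_a\omega_b=\omega_{[a+b]}$ to get $s\,u_h=\sum_{k=0}^{s-1}\omega_{[\sigma(k)-hk]}$. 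Each summand is an $s$-th root of unity, so $s\,u_h$ is a sum of $s$ (not necessarily distinct) roots of unity. Collecting equal terms via $v_r=\#\{k:[\sigma(k)-hk]=r\}$ gives non-negative integers with $\sum_r v_r=s$ and $s\,u_h=\sum_{r=0}^{s-1}v_r\omega_r$, which is the asserted representation with $v_r\in\mathbb{N}$.

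For uniqueness I would first observe that the representation can never be literally unique, because $\sum_{r=0}^{s-1}\omega_r=0$ means adding a common integer $c$ to every $v_r$ leaves $\sum_r v_r\omega_r$ unchanged; this is precisely the additive-constant ambiguity in the statement. The real task is to show that for prime $s$ this is the only ambiguity. So I would take two integer representations $\sum_r v_r\omega_r=\sum_r v_r'\omega_r$, set $w_r=v_r-v_r'\in\mathbb{Z}$, and study the relation $\sum_{r=0}^{s-1}w_r\omega_r=0$. Here the prime hypothesis enters: the minimal polynomial of the primitive root $\omega_1$ is the cyclotomic polynomial $\Phi_s(x)=1+x+\cdots+x^{s-1}$, irreducible of degree $s-1$, so $\{\omega_0,\dots,\omega_{s-2}\}$ is a $\mathbb{Q}$-basis of $\mathbb{Q}(\omega_1)$ and $\omega_{s-1}=-(\omega_0+\cdots+\omega_{s-2})$. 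Substituting this into the relation and invoking linear independence of $\omega_0,\dots,\omega_{s-2}$ forces $w_0=\cdots=w_{s-1}$, i.e.\ $v_r=v_r'+c$ for a single integer $c$, which is exactly uniqueness up to an additive integer constant.

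The existence half is essentially bookkeeping, so the genuine obstacle is the uniqueness half, and it is entirely a matter of correctly pinning down the $\mathbb{Q}$-linear relations among the $s$-th roots of unity. The point I would emphasize is that only for $s$ prime is $\Phi_s(x)=1+x+\cdots+x^{s-1}$, so that $\sum_r\omega_r=0$ is the unique relation up to scaling; for composite $s$ there are extra relations — for instance $\omega_2=-\omega_0$ when $s=4$ — and then the representation fails to be unique even up to a global constant, which is exactly why the second assertion must be restricted to prime $s$.
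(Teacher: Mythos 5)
Your proof is correct and follows essentially the same route as the paper: existence is read directly off the inverse Vandermonde system \eqref{transinv} by collecting the $s$ roots of unity $\omega_{[\sigma(k)-hk]}$ into non-negative integer multiplicities, and uniqueness for prime $s$ comes from the fact that $\Phi_s(x)=1+x+\cdots+x^{s-1}$ is irreducible, so the all-ones relation $\sum_r\omega_r=0$ is the only $\mathbb{Q}$-linear relation among $\omega_0,\dots,\omega_{s-1}$. The paper merely states the first part as immediate and defers uniqueness to \cite{pistone|rogantin:08}; you have supplied the standard argument in full, including the correct explanation (extra relations such as $\omega_2=-\omega_0$ for $s=4$) of why the prime hypothesis is needed.
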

\begin{proof}
The first part follows directly from the expression of the inverse of the Vandermonde matrix in Eq. \eqref{transinv}. For the uniqueness, see \cite{pistone|rogantin:08}.
\end{proof}

Now we show how to use the equations above in order to study the isomorphism between two fractions, by merging polynomial constraints on the support of a full factorial design and the polynomial constraints in Prop. \ref{prima}. The computations below are carried out with the free software {\tt CoCoA}, see \cite{CoCoA-5}. Also in these examples we will make use of basic tools from Computational Commutative Algebra, such as polynomial ideal, Gr\"obner basis, Normal Form. The basic definitions and results needed here are collected in Appendix A. Useful techniques to handle polynomials can be found in \cite{abbott:02}.

We write the polynomial indicator function of the two fractions under investigation, and we do some algebraic manipulations in order to obtain the coefficients of the (possible) permutation needed to transform the first fraction into the second one. In particular, we check if a fraction is isomorphic to a regular one. Our examples are given in the $5^3$ case, where there is  only one regular orthogonal array with strength $2$ up to monomial transformations. There are several online databases of orthogonal arrays. The examples analyzed here are taken from \cite{eendebak:sito}, generated with the algorithm introduced in \cite{schoen:10}.

Let $\mathcal F_0$ and $\mathcal F_1$ be the two fractions to compare, with polynomial indicator functions $F_0$ and $F_1$ respectively.
Consider  a generic transformation $\pi=(\pi_1,\pi_2,\pi_3)$ where $\pi_j$ acts on the factor $X_j$:
\[
\pi_j: \ X_j \longrightarrow \sum_{k=0}^4 u_{kj}X_j^k \, .
\]
In particular, if we want to check the regularity of the fraction $\mathcal F_1$, then $F_0$ is the indicator function of the
regular fraction with defining equation $X_1X_2X_3=\omega_0$, namely $F_0^{(r)} = \frac 1 4 \sum_{k=0}^4 \left(X_1 \ X_2 \ X_3 \right)^k $

\begin{enumerate}
\item Consider  the ring of the indeterminates
\begin{itemize}
  \item[-] \texttt{x[1],x[2],x[3]}, the factors;
  \item[-] \texttt{u[0..4, 1..3]}, the $5\times 3$ transformation coefficients;
  \item[-] \texttt{w}, the $5$-th primitive root of the unity, satisfying the equation \\
  \verb"1+w+w^2+w^3+w^4=0". The indeterminate \texttt{w} is considered here as a parameter.
\end{itemize}

\item Input \verb"F0" and \verb"F1", the indicator functions of the two fractions to be compared (minus one).

\item Consider \texttt{I}, the ideal  generated by the polynomials defining the full factorial design ($x_i^5-1$, for $i=1,2,3$) and the $3 \times 5$ polynomials with the conditions for the transformation coefficients. The CoCoA code for such polynomials is:

    {\footnotesize{\begin{verbatim}
L:=NewList(5);
L[1] := [u[0,j] |j in 1..3];
L[2] := [(Sum([u[i,j] | i In 1..4]))^5 -1 |j in 1..3];
L[3] := [Sum([u[i,j]*u[Mod(-i,5),j]| i In 1..4]) |j in 1..3];
L[4] := [Sum(Sum([[u[i,j]*u[h,j]*u[Mod(-i-h,5),j]| i In 1..4]
                                        | h In 1..4])) |j in 1..3];
L[5] := [Sum(Sum(Sum([[[u[i,j]*u[h,j]*u[k,j]*u[Mod(-i-h-k,5),j]
                |i In 1..4]| h In 1..4]| k In 1..4]))) |j in 1..3];
    \end{verbatim}}}
\item Compute \verb"NF_P_F0", the normal form of the transformation of $F_0$ by $\pi$ in the quotient space $K/I$ and compute \verb"Coe_F0", the list of the coefficients of the terms in \texttt{x[1],x[2],x[3]} appearing in \verb"NF_P_F0".
\item Compute \verb"Coe_F1", the list of the coefficients of the terms in \texttt{x[1], x[2], x[3]} appearing in \texttt{F1}.
\item Compute \verb"Coe_Diff", the difference between the coefficients \verb"Coe_F1" and the coefficients \verb"Coe_F0", for all the terms in \verb"Coe_F0".
\item \label{it:stop} Compute the Gr\"obner Basis of the ideal generated by the polynomials in \verb"Coe_Diff" and the polynomials in \verb"L[0..4]" with the conditions for the transformation coefficients.
\end{enumerate}
If the  Gr\"obner basis is empty, then the two fractions are not isomorphic, otherwise the
 Gr\"obner Basis contains equations on the transformation coefficients that allow us to find the permutations.

Notice that, if $F_0^{(r)}$ is the indicator function with generating equation $X_1X_2X_3=1$, then the number of terms of \verb"NF_P_F0" is $6401$, while, obviously, the length of \verb"Coe_F0" is $65$, the number of the interactions of order $3$ plus the constant.

\begin{example}
Let $\mathcal F_A$, $\mathcal F_B$ and $\mathcal F_C$ be three fractions of a $5^3$ factorial design, whose indicator functions are shown in Appendix B. The first two fractions are listed in \cite{eendebak:sito}. The fraction $\mathcal F_B$ is a regular fraction evidently.

Using the previous algorithm we checked if they are isomorphic to the regular fraction $\mathcal F^{(r)}$ above. In the firs case, the Gr\"obner Basis of the step \ref{it:stop} has only the element $1$; then $\mathcal F_A$ is not isomorphic to any regular fraction of a $5^3$ factorial design. In the last case the Gr\"obner Basis  has been computed in $29$ secs. of CPU time and contains 91 elements. A solution is:
\begin{multline*}
u_{.,1}=(0,1,0,0,0)\qquad u_{.,3}=(0,0,0,0,1)\\
u_{.,2}=\frac 1 5 \left(0,2-\omega_2-\omega_3,\ 2\omega_1+\omega_2+2\omega_3, \ \omega_1+\omega_2-\omega_3-1, \ 2\omega_1-\omega_2-1\right) \end{multline*}
that corresponds to no permutation on the first factor, a power permutation on the third factor and the switch between the levels $0$ and $1$ on the second one.
\end{example}

\section{Regularity check of multi-level orthogonal arrays} \label{ls-sect}

In this section we approach the problem of checking the regularity of a multi-level orthogonal array using the complex coding of the factor levels. In particular we provide a result which leads us to check if a given orthogonal array may be regarded as a regular fraction under suitable permutations of the factor levels. This technique exploits the connections between orthogonal arrays and Latin squares, see for instance \cite{keedwell|denes:15} and \cite{hedayat:99}, and it is based on the generating equations of the regular fraction rather than on the whole indicator function.

We focus on orthogonal arrays with strength 2. Let ${\mathcal F}$ be an $OA(n, s^m, 2)$ with $s$ prime. A regular fraction with strength $2$ has at least one generating equation involving only $3$ factors. Therefore we first look at generating equations involving three factors. Without loss of generality, let us consider the factors $X_{1},X_{2},X_{3}$. Let
\begin{equation} \label{gen-eq}
X_1^{\alpha_1}X_2^{\alpha_2}X_3^{\alpha_3} = \omega_k
\end{equation}
be a generating equation of $\mathcal F$, with $\alpha_1,\alpha_2,\alpha_3 \in \{1, \ldots, s-1\}$ and $\omega_k \in \Omega_s$. For brevity, we say that $X_1, X_2, X_3$ form a generating equation of the orthogonal array ${\mathcal F}$.

If $X_3$ is a function of $X_1$ and $X_2$, we can consider the $s \times s$ table $C=X_3(X_1,X_2)$ containing the values of $X_3$ as a function of $X_1$ and $X_2$, i.e., $C_{j_1,j_2}=x_3$ given $x_1=\omega_{j_1}$ and $x_2=\omega_{j_2}$. Since the strength of the orthogonal array is $2$, the table $C$ may be regarded as a $s \times s$ Latin square with values in $\Omega_s$. The main result can be stated as follows.

\begin{theorem} \label{th-LS}
Let $X_{1},X_{2},X_{3}$ be three factors of an $OA(n, s^m, 2)$, $s$ prime. If $X_3$ is a function of $X_1$ and $X_2$, let $X_3(X_1,X_2)$ be the corresponding Latin square.
\begin{itemize}
\item[(a)] If $X_{1},X_{2},X_{3}$ form a generating equation, then $X_3(X_1,X_2)$ has rank $1$ in ${\mathbb C}$, i.e., all $2 \times 2$ minors of $X_3(X_1,X_2)$ vanish in ${\mathbb C}$;

\item[(b)] If there is a permutation $\pi_3$ of $\Omega_s$ such that $(\pi_3(X_3))(X_1,X_2)$ is a Latin square with rank $1$ in ${\mathbb C}$, then there exist permutations $\pi_1$ and $\pi_2$ such that $\pi_1(X_1), \pi_2(X_2), \pi_3(X_3)$ form a generating equation.
\end{itemize}
\end{theorem}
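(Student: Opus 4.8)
The plan is to solve the generating equation for $X_3$. Since $s$ is prime and $\alpha_3 \in \{1,\ldots,s-1\}$, the exponent $\alpha_3$ is invertible modulo $s$; let $\beta_3=[\alpha_3^{-1}]$. Raising the generating equation $X_1^{\alpha_1}X_2^{\alpha_2}X_3^{\alpha_3}=\omega_k$ to the power $\beta_3$ and using $X_3^s=1$ gives $X_3=\omega_{[k\beta_3]}\,X_1^{[-\alpha_1\beta_3]}\,X_2^{[-\alpha_2\beta_3]}$. Writing $X_1=\omega_{j_1}$ and $X_2=\omega_{j_2}$, the Latin square entry becomes $C_{j_1,j_2}=\omega_{[k\beta_3]}\,\omega_{[\gamma_1 j_1]}\,\omega_{[\gamma_2 j_2]}$ with $\gamma_1=[-\alpha_1\beta_3]$ and $\gamma_2=[-\alpha_2\beta_3]$. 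This factors as $a_{j_1}b_{j_2}$ with $a_{j_1}=\omega_{[k\beta_3]}\omega_{[\gamma_1 j_1]}$ and $b_{j_2}=\omega_{[\gamma_2 j_2]}$, so $C$ is the outer product of two nonzero vectors and therefore has rank $1$: every $2\times 2$ minor $a_{j_1}b_{j_2}a_{j_1'}b_{j_2'}-a_{j_1}b_{j_2'}a_{j_1'}b_{j_2}$ vanishes identically.

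\textbf{Part (b).} I would run this argument in reverse. Write $D=(\pi_3(X_3))(X_1,X_2)$. The rank-$1$ hypothesis means $D_{j_1,j_2}=a_{j_1}b_{j_2}$ for some vectors, unique up to the rescaling $a\to\lambda a$, $b\to\lambda^{-1}b$. Because every entry of $D$ lies in $\Omega_s$ and hence has modulus $1$, the products $|a_{j_1}||b_{j_2}|$ all equal $1$, which forces $|a_{j_1}|$ and $|b_{j_2}|$ to be constant; after rescaling I may assume each $a_{j_1}$ and each $b_{j_2}$ has modulus $1$. Reading off the first column and first row, I set $D_{j_1,0}=\omega_{p_{j_1}}$ and $D_{0,j_2}=\omega_{q_{j_2}}$; being a column and a row of a Latin square these are permutations of $\Omega_s$, so $j_1\mapsto p_{j_1}$ and $j_2\mapsto q_{j_2}$ are bijections of $\mathbb{Z}_s$. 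Substituting $a_{j_1}=D_{j_1,0}/b_0$ and $b_{j_2}=D_{0,j_2}/a_0$ into $D_{j_1,j_2}=a_{j_1}b_{j_2}$ and using $a_0 b_0=D_{0,0}=\omega_{p_0}$ yields the additive form $D_{j_1,j_2}=\omega_{[p_{j_1}+q_{j_2}-p_0]}$.

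It then remains to convert this additive structure into a monomial equation. I would define level permutations $\pi_1$ and $\pi_2$ by $\pi_1(\omega_{j_1})=\omega_{p_{j_1}}$ and $\pi_2(\omega_{j_2})=\omega_{q_{j_2}}$, which are genuine permutations of $\Omega_s$ precisely because $p$ and $q$ are bijections. Writing $Y_i=\pi_i(X_i)$, on each run $Y_1=\omega_{p_{j_1}}$, $Y_2=\omega_{q_{j_2}}$, and $Y_3=D_{j_1,j_2}=Y_1 Y_2\,\omega_{[-p_0]}$. Since $Y_3^{s-1}=Y_3^{-1}$, this gives $Y_1 Y_2 Y_3^{s-1}=\omega_{p_0}$ at every point of the projection onto $(X_1,X_2)$; as the strength-$2$ array projects onto the full factorial in these two factors, the identity holds throughout $\mathcal F$. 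This is exactly a generating equation in $\pi_1(X_1),\pi_2(X_2),\pi_3(X_3)$ with exponents $(1,1,s-1)$.

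The routine direction is (a); the substance lies in (b). The main obstacle is the middle step there: extracting, from the purely linear-algebraic rank-$1$ condition together with the combinatorial Latin-square condition, the rigid additive form $D_{j_1,j_2}=\omega_{[p_{j_1}+q_{j_2}-p_0]}$, and in particular verifying that the index maps $p$ and $q$ recovered from a single row and column are bijections, so that $\pi_1,\pi_2$ are legitimate permutations. One must also keep track of the modulus normalization of the factorization, which is only unique up to scaling, and exploit that every entry is an $s$-th root of unity.
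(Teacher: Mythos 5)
Your proof is correct and follows essentially the same route as the paper: in (a) you invert $\alpha_3$ modulo the prime $s$, solve for $X_3$, and observe the Latin square is an outer product; in (b) you read the row and column permutations off the first column and row of the rank-$1$ Latin square and recover the equation $\pi_1(X_1)\pi_2(X_2)\pi_3(X_3)^{s-1}=\omega_{p_0}$. The only difference is that you spell out explicitly (via the factorization $D_{j_1,j_2}=a_{j_1}b_{j_2}$ and the identity $D_{j_1,j_2}=D_{j_1,0}D_{0,j_2}/D_{0,0}$) the step the paper dismisses as ``immediate to check'' after putting the square in reduced form, which is a welcome addition rather than a divergence.
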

\begin{proof}
$(a)$ By hypothesis there exist $\alpha_1,\alpha_2,\alpha_3 \in \{1, \ldots, s-1\}$ and $\omega_k \in \Omega_s$ such that $X_1^{\alpha_1}X_2^{\alpha_2}X_3^{\alpha_3} = \omega_k$. Since $s$ is prime, we can assume $\alpha_3=1$. In fact, given a generating equation $X_1^{\alpha_1}X_2^{\alpha_2}X_3^{\alpha_3} = \omega_k$, there exists $r$ such that $[r\alpha_3]=1$ and the equation
\[
X_1^{[r\alpha_1]}X_2^{[r\alpha_2]}X_3 = \omega_{[rk]}
\]
is also a generating equation of the fraction. Consider the Latin square $C=X_3(X_1,X_2)$. The entry $C_{j_1,j_2}$ of $C$ is $C_{j_1,j_2}= \omega_{[rk]} \omega_{j_1}^{[-r\alpha_1]}\omega_{j_2}^{[-r\alpha_2]}$ for $j_1,j_2=0, \ldots, s-1$ and the generic $2 \times 2$ minor of $C$ is
\begin{multline*}
\omega_{[2rk]}\left( \omega_{[j_{11}-r\alpha_1]}  \omega_{[j_{21}-r\alpha_2]} \omega_{[j_{12}-r\alpha_1]} \omega_{[j_{22}-r\alpha_2]} - \right. \\ \left. \omega_{[j_{11}-r\alpha_1]}  \omega_{[j_{22}-r\alpha_2]} \omega_{[j_{12}-r\alpha_1]} \omega_{[j_{21}-r\alpha_2]} \right)
\end{multline*}
that equals $0$ for all pairs of distinct row indices $j_{11},j_{12} \in \{0, \ldots, s-1\}$ and for all pairs of distinct column indices $j_{21},j_{22} \in \{0, \ldots, s-1\}$.

$(b)$ Suppose that there is a permutation $\pi_3$ of $X_3$ such that the table $C=\pi_3(X_3)(X_1,X_2)$ is a Latin square with all $2 \times 2$ minors equal to $0$. Then apply suitable permutations $\pi_1$ and $\pi_2$ to $X_1$ and $X_2$, respectively, in order to obtain a Latin square in reduced form, i.e., with the first row and column lexicographically ordered. Now it is immediate to check that $\pi_3(X_3)=\pi_1(X_1)\pi_2(X_2)$ and therefore a defining equation after the level permutations is $\pi_1(X_1)\pi_2(X_2)\pi_3(X_3)^{[-1]}=\omega_0$.
\end{proof}

Some remarks on part $(b)$ of the theorem above are now in order. We can exploit the monomial representation of the permutations in Prop. \ref{pr:recod} to reduce considerably the computational cost. First, observe that the permutations to be checked on $X_3$ are at most $(s-2)!$. In fact, we can exclude the powers (to each defining equation correspond other $(s-2)$ equivalent ones) and the $s$ cyclic permutations (they only affect the constant term). For instance, if $s=5$, there are $120$ level permutations but only six are to be checked. Secondly, the relevant permutations of $X_1$ and $X_2$ are the permutations needed to put the Latin square in reduced form. Additionally, if the permutations $\pi_1$ and $\pi_2$ can be expressed in monomial form (powers and/or cyclic permutations), then the defining equation can be written without actual permutations on $X_1$ and $X_2$.

Finally, note that the permutations of the factors are not uniquely defined: for instance, a shift of the form $\omega_hX$ can be applied to whatever factor and it generates a unique transformation in the generating equation.

Before analyzing the general case of symmetric multilevel designs, we present some  applications of the theorem above in the simple case of orthogonal arrays with 3 factors and strength 2, so that there is only one defining equation.

\begin{example} \label{ex3}
In the framework of the $5^3$ full factorial design, consider the three orthogonal arrays with strength $2$ identified by the three Latin squares in Figure \ref{figls1}. To ease the readability of the tables, we write $k$ in place of $\omega_k$. The three designs are built from the $2$ non-isomorphic orthogonal arrays with strength $2$ listed in \cite{eendebak:sito}. The array in (a) is the second fraction in \cite{eendebak:sito}, the array in (b) is obtained by the previous one, with some permutations on the factor levels, and the array in (c) is the first fraction in \cite{eendebak:sito}. The three indicator functions of these fractions are in Appendix B. In this case of fractions of the $5^3$ design with $25$ runs, it is known that there are no other non-isomorphic orthogonal arrays.

\begin{figure}
\begin{center}
\begin{tabular}{ccc}
\begin{tabular}{c|c|c|c|c|c|}
  \multicolumn{1} {c} { }    &
  \multicolumn{1} {c} {0} &  \multicolumn{1} {c} {1} &  \multicolumn{1} {c} {2} &  \multicolumn{1} {c} {3} &  \multicolumn{1} {c} {4}  \\
  \cline{2-6}
0 & 0 & 1 & 2 & 3 & 4    \\  \cline{2-6}
1 & 1 & 2 & 3 & 4 & 0  \\ \cline{2-6}
2 & 2 & 3 & 4 & 0 & 1    \\ \cline{2-6}
3 & 3 & 4 & 0 & 1 & 2    \\ \cline{2-6}
4 & 4 & 0 & 1 & 2 & 3     \\ \cline{2-6}
\end{tabular} &
\begin{tabular}{c|c|c|c|c|c|}
  \multicolumn{1} {c} { }    &
  \multicolumn{1} {c} {0} &  \multicolumn{1} {c} {1} &  \multicolumn{1} {c} {2} &  \multicolumn{1} {c} {3} &  \multicolumn{1} {c} {4}  \\
  \cline{2-6}
 0& 2 & 4 & 0 & 1 & 3   \\  \cline{2-6}
 1& 0 & 2 & 1 & 3 & 4  \\  \cline{2-6}
 2& 3 & 1 & 4 & 2 & 0   \\ \cline{2-6}
 3& 4 & 3 & 2 & 0 & 1    \\ \cline{2-6}
 4& 1 & 0 & 3 & 4 & 2  \\ \cline{2-6}
\end{tabular} &
\begin{tabular}{c|c|c|c|c|c|}
  \multicolumn{1} {c} { }    &
  \multicolumn{1} {c} {0} &  \multicolumn{1} {c} {1} &  \multicolumn{1} {c} {2} &  \multicolumn{1} {c} {3} &  \multicolumn{1} {c} {4}  \\
  \cline{2-6}
 0 & 0 & 1 & 2 & 3 & 4  \\  \cline{2-6}
 1 & 1 & 0 & 3 & 4 & 2  \\  \cline{2-6}
 2 & 2 & 3 & 4 & 0 & 1  \\  \cline{2-6}
 3 & 3 & 4 & 1 & 2 & 0  \\  \cline{2-6}
 4 & 4 & 2 & 0 & 1 & 3   \\  \cline{2-6}
\end{tabular} \\
 (a) &  (b) &  (c) \\
 \end{tabular}
 \end{center}
\caption{\label{figls1}The three orthogonal arrays of Example \ref{ex3}.}
\end{figure}

\begin{itemize}
\item[(a)] All the $2 \times 2$ minor vanish in ${\mathbb C}$. The defining equation of this fraction, without any permutations, is $X_1X_2=X_3$ or equivalently $X_1X_2X_3^4=\omega_0$.

\item[(b)] This fraction has been built from the previous one by applying the permutations $(\omega_3,\omega_2,\omega_4,\omega_1,\omega_0)$ to $X_1$, $(\omega_2,\omega_0,\omega_1,\omega_4,\omega_3)$ to $X_2$ and $(\omega_2,\omega_4,\omega_0,\omega_3,\omega_1)$ to $X_3$. Among the five possible (other than the identity) on $X_3$, we observe that the permutation $(\omega_4,\omega_3,\omega_0,\omega_2,\omega_1)$ produces the Latin square in Figure \ref{figls2}, where all the $2 \times 2$ minor vanish in ${\mathbb C}$.

    \begin{figure}
    \begin{center}
    \begin{tabular}{c|c|c|c|c|c|}
  \multicolumn{1} {c} { }    &
  \multicolumn{1} {c} {0} &  \multicolumn{1} {c} {1} &  \multicolumn{1} {c} {2} &  \multicolumn{1} {c} {3} &  \multicolumn{1} {c} {4} \\
  \cline{2-6}
0 & 0 & 1 & 4 & 3 & 2  \\  \cline{2-6}
1 & 4 & 0 & 3 & 2 & 1  \\ \cline{2-6}
2 & 2 & 3 & 1 & 0 & 4  \\ \cline{2-6}
3 & 1 & 2 & 0 & 4 & 3  \\ \cline{2-6}
4 & 3 & 4 & 2 & 1 & 0 \\ \cline{2-6}
\end{tabular}\end{center}
\caption{\label{figls2}The Latin square of the orthogonal array (b) of Example \ref{ex3} after the permutation $(\omega_4,\omega_3,\omega_0,\omega_2,\omega_1)$ on $X_3$.}
\end{figure}

Looking at the rows and columns beginning with $\omega_0$, we can read easily the permutations of the levels of $X_1$ and $X_2$ needed to obtain a Latin square in reduced form. From the first column we read that the permutation on $X_1$ is $(\omega_0, \omega_3, \omega_2, \omega_4, \omega_1)$, while in the first row we read that the permutation on $X_2$ is $(\omega_0, \omega_1, \omega_4, \omega_3, \omega_2)$. Notice that such permutations are those used in the construction of the orthogonal array up to a shift of four levels for $X_1$ and one level for $X_2$. In this example, the relevant permutations can not be expressed in monomial form, and thus the permutations of the factor levels are unavoidable.

\item[(c)] The Latin squares after the $5$ relevant permutations of $X_3$ are in Figure \ref{figls3}. The first minor in all tables is $\omega_0^2-\omega_1^2 = 1-\omega_2 \ne 0$ and this is enough to conclude that this fraction can not be transformed into a regular one.

\begin{figure}
\begin{center}
\begin{tabular}{ccc}
\begin{tabular}{c|c|c|c|c|c|}
  \multicolumn{1} {c} { }    &
  \multicolumn{1} {c} {0} &  \multicolumn{1} {c} {1} &  \multicolumn{1} {c} {2} &  \multicolumn{1} {c} {3} &  \multicolumn{1} {c} {4}  \\
  \cline{2-6}
0 & 0 & 1 & 3 & 2 & 4 \\ \cline{2-6}
1 & 1 & 0 & 2 & 4 & 3  \\ \cline{2-6}
2  &  3 & 2 & 4 & 0 & 1    \\ \cline{2-6}
3  &  2 & 4 & 1 & 3 & 0    \\ \cline{2-6}
4  &  4 & 3 & 0 & 1 & 2     \\ \cline{2-6}
\end{tabular} &
\begin{tabular}{c|c|c|c|c|c|}
  \multicolumn{1} {c} { }    &
  \multicolumn{1} {c} {0} &  \multicolumn{1} {c} {1} &  \multicolumn{1} {c} {2} &  \multicolumn{1} {c} {3} &  \multicolumn{1} {c} {4}  \\
  \cline{2-6}
0 &  0 & 1 & 3 & 4 & 2   \\  \cline{2-6}
 1 &  1 & 0 & 4 & 2 & 3  \\  \cline{2-6}
 2 &  3 & 4 & 2 & 0 & 1   \\ \cline{2-6}
 3 &  4 & 2 & 1 & 3 & 0    \\ \cline{2-6}
 4 &  2 & 3 & 0 & 1 & 4  \\ \cline{2-6}
\end{tabular} &
\begin{tabular}{c|c|c|c|c|c|}
  \multicolumn{1} {c} { }    &
  \multicolumn{1} {c} {0} &  \multicolumn{1} {c} {1} &  \multicolumn{1} {c} {2} &  \multicolumn{1} {c} {3} &  \multicolumn{1} {c} {4}  \\
  \cline{2-6}
 0 & 0 & 1 & 2 & 4 & 3  \\  \cline{2-6}
 1 & 1 & 0 & 4 & 3 & 2  \\  \cline{2-6}
 2 & 2 & 4 & 3 & 0 & 1   \\  \cline{2-6}
 3 & 4 & 3 & 1 & 2 & 0  \\  \cline{2-6}
 4 & 3 & 2 & 0 & 1 & 4    \\  \cline{2-6}
\end{tabular} \\
\end{tabular}

\medskip
\begin{tabular}{ccc}
\begin{tabular}{c|c|c|c|c|c|}
  \multicolumn{1} {c} { }    &
  \multicolumn{1} {c} {0} &  \multicolumn{1} {c} {1} &  \multicolumn{1} {c} {2} &  \multicolumn{1} {c} {3} &  \multicolumn{1} {c} {4}  \\
  \cline{2-6}
0 & 0 & 1 & 4 & 2 & 3 \\ \cline{2-6}
1 & 1 & 0 & 2 & 3 & 4  \\ \cline{2-6}
2 & 4 & 2 & 3 & 0 & 1    \\ \cline{2-6}
3 & 2 & 3 & 1 & 4 & 0    \\ \cline{2-6}
4 & 3 & 4 & 0 & 1 & 2     \\ \cline{2-6}
\end{tabular} &
\begin{tabular}{c|c|c|c|c|c|}
  \multicolumn{1} {c} { }    &
  \multicolumn{1} {c} {0} &  \multicolumn{1} {c} {1} &  \multicolumn{1} {c} {2} &  \multicolumn{1} {c} {3} &  \multicolumn{1} {c} {4}  \\
  \cline{2-6}
0 &  0 & 1 & 4 & 3 & 2    \\  \cline{2-6}
 1 &  1 & 0 & 3 & 2 & 4 \\  \cline{2-6}
2 &  4 & 3 & 2 & 0 & 1    \\ \cline{2-6}
  3 &  3 & 2 & 1 & 4 & 0    \\ \cline{2-6}
 4 &  2 & 4 & 0 & 1 & 3 \\ \cline{2-6}
\end{tabular}
\end{tabular}
\end{center}
\caption{\label{figls3}The five Latin squares obtained from the orthogonal array (c) of Example \ref{ex3} after the five non-identical relevant permutations.}
\end{figure}
\end{itemize}
\end{example}

\begin{remark}
In the part $(c)$ of the previous example, note that the $2 \times 2$ upper-left matrix has non-zero determinant for all permutations of the levels of $X_3$. In fact, if the permutation $\pi_3$ maps $\omega_0$ into $\omega_{j_0}$ and $\omega_1$ into $\omega_{j_1}$, with $j_0 \ne j_1$, one obtains the minor $\omega_{j_1}^2-\omega_{j_0}^2 = \omega_{[2j_1-2j_0]} \ne 0$. This remark can also be used to build non-regular orthogonal arrays also in case of a large number of levels, as illustrated in the example below.
\end{remark}

\begin{example}
The Latin square in Figure \ref{figls4} represents an orthogonal array of strength $2$ of the $7^3$ design. It has been defined starting from the upper-left $2 \times 2$ sub-matrix, and then completed in the remaining entries. By construction, it is a non-regular design even under permutations of the factor levels.

\begin{figure}
\begin{center}
\begin{tabular}{c|c|c|c|c|c|c|c|}
  \multicolumn{1} {c} { }    &
  \multicolumn{1} {c} {0} &  \multicolumn{1} {c} {1} &  \multicolumn{1} {c} {2} &  \multicolumn{1} {c} {3} &  \multicolumn{1} {c} {4} &  \multicolumn{1} {c} {5} &  \multicolumn{1} {c} {6} \\
  \cline{2-8}
0 & 0 &   1 &   2 &   3  &  4  &  5 &   6  \\  \cline{2-8}
1 & 1 &   0 &   4 &   5  &  2  &  6 &   3  \\ \cline{2-8}
2 & 4 &   2 &   3 &   6  &  5  &  0 &   1   \\ \cline{2-8}
3 & 6 &   3 &   5 &   0  &  1  &  4 &   2  \\ \cline{2-8}
4 & 5 &   4 &   1 &   2  &  6  &  3 &   0  \\ \cline{2-8}
5 & 3 &   5 &   6 &   1  &  0  &  2 &   4  \\ \cline{2-8}
6 & 2 &   6 &   0 &   4  &  3  &  1 &   5 \\ \cline{2-8}
\end{tabular}\end{center}
\caption{\label{figls4}A Latin square representing a non regular orthogonal array of strength 2 of the $7^3$ design.}
\end{figure}
\end{example}

Theorem \ref{th-LS} can be used for constructing an algorithm to check the regularity of orthogonal arrays with strength $2$ and with an arbitrary number of factors, under permutations of the factor levels.

First, consider an orthogonal array with one defining equation involving $m>3$ factors. Theorem \ref{th-LS} can be applied recursively layer by layer. For instance fix $m=4$. The factors $X_1, X_2, X_3, X_4$ form a generating equation if and only if
\begin{equation*}
X_1^{\alpha_1}X_2^{\alpha_2}X_3^{\alpha_3}X_4 = \omega_k
\end{equation*}
for some $\alpha_1,\alpha_2,\alpha_3 \in \{1, \ldots, s-1\}$ and $\omega_k \in \Omega_s$. Thus, for each $x_4=\omega_{j_4}$, $j_4=0, \ldots, s-1$, the equation
\begin{equation*}
X_1^{\alpha_1}X_2^{\alpha_2}X_3^{\alpha_3} = \omega_{[k-j_4]}
\end{equation*}
is satisfied.
Conversely, if Theorem \ref{th-LS} applies to each layer  $x_4 = \omega_{j_4}$, $j_4=0, \ldots, s-1$, with the same permutation, and there exist permutations $\pi_1,\pi_2,\pi_3$ on $X_1,X_2,X_3$ respectively such that
\begin{equation*}
\pi_1(X_1)^{\alpha_1}\pi_2(X_2)^{\alpha_2}\pi_3(X_3)^{\alpha_3} = \omega_{k(j_4)}
\end{equation*}
with different $\omega_{k(j_4)}$, for $j_4=0,\ldots, s-1$, then $k(j_4)$ defines a permutation $\pi_4$ for $X_4$ and
\begin{equation*}
\pi_1(X_1)^{\alpha_1}\pi_2(X_2)^{\alpha_2}\pi_3(X_3)^{\alpha_3}\pi_4(X_4)^{[s-1]} = \omega_0
\end{equation*}
is a defining equation for the orthogonal array.

For orthogonal arrays with strength $2$ and $m$ factors, we check the possible defining equations in the following order.
\begin{enumerate}
\item First, check all the $3$-tuples.

\item Then, if the defining equations with $3$ factors are not sufficient to define the orthogonal array, check the defining equations with $4$ or more factors.
\end{enumerate}
Notice that the number of (independent) defining equations is $\#{\mathcal D}/\#{\mathcal F}$.

The regularity check is based on the following property of regular fractions, that can be easily proved within the framework of the complex coding of the level factors. The proof of this result is based on the properties of the elimination ideals, see Appendix A for some basic definitions.

\begin{proposition}
Let ${\mathcal F}$ be a regular fraction of a $s^m$ design, and let $I \subset \{1, \ldots, m\}$. Denote with ${\mathcal F}_I$ the projection of ${\mathcal F}$ onto the $I$-factors. Apart from the multiplicity, ${\mathcal F}_I$ is either a full factorial design or a regular fraction.
\end{proposition}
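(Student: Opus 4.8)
\emph{The plan is to} linearize everything by passing to discrete logarithms of the level coordinates, under which a regular fraction becomes an affine subspace over the field $\mathbb Z_s$ and projection onto the $I$-factors becomes a $\mathbb Z_s$-linear map; the whole statement then reduces to the elementary fact that the image of an affine subspace under a linear map is again an affine subspace. Concretely, I would write each point $\zeta\in\mathcal D$ as $\zeta=(\omega_{z_1},\dots,\omega_{z_m})$ with $z=(z_1,\dots,z_m)\in\mathbb Z_s^m$, so that by the rules in Section \ref{sec:alg} one has $X^\alpha(\zeta)=\omega_{[\alpha\cdot z]}$ with $\alpha\cdot z=\sum_{j=1}^m\alpha_jz_j$. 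A defining equation $X^\alpha=\omega_{e(\alpha)}$ of Definition \ref{def:reg} then reads as the linear equation $\alpha\cdot z=e(\alpha)$ over $\mathbb Z_s$ (a field, since $s$ is prime). Because $e$ is a group homomorphism, any linear relation $\sum c_\alpha\alpha=0$ among the $\alpha\in\mathcal L$ forces $\sum c_\alpha e(\alpha)=e\bigl(\sum c_\alpha\alpha\bigr)=0$, so the system $\{\alpha\cdot z=e(\alpha):\alpha\in\mathcal L\}$ is consistent and its solution set is an affine subspace $W=z_0+\mathcal L^{\perp}$, where $\mathcal L^{\perp}=\{z:\alpha\cdot z=0\ \forall\alpha\in\mathcal L\}$. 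Thus, in log coordinates, $\mathcal F$ is exactly $W$.

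Next I would observe that projecting onto the $I$-factors is precisely the coordinate projection $p_I\colon\mathbb Z_s^m\to\mathbb Z_s^{\,|I|}$, $z\mapsto(z_i)_{i\in I}$, which is $\mathbb Z_s$-linear, so that $\mathcal F_I$ corresponds to the image $p_I(W)=p_I(z_0)+V_I$ with $V_I=p_I(\mathcal L^{\perp})$, an affine subspace $W_I=w_0+V_I$ of $\mathbb Z_s^{\,|I|}$. The multiplicity is handled here: for any $w\in W_I$ the fibre $\{z\in W:p_I(z)=w\}$ is a coset of the fixed subgroup $\mathcal L^{\perp}\cap\ker p_I$, hence every point of the support of $\mathcal F_I$ is attained exactly $k=\#(\mathcal L^{\perp}\cap\ker p_I)$ times, which is the uniform multiplicity $k$ appearing in Definition \ref{de:projectivity}. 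This linear elimination of the coordinates $\{z_j:j\notin I\}$ is the concrete shadow of passing to the elimination ideal $I(\mathcal F)\cap\mathbb C[X_i:i\in I]$ (Appendix A).

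Finally I would split on the size of $V_I$. If $V_I=\mathbb Z_s^{\,|I|}$ then $W_I$ is the whole space and the support of $\mathcal F_I$ is all of $\Omega_s^{\,|I|}$, so $\mathcal F_I$ is a full factorial design with each point repeated $k$ times. If $V_I\subsetneq\mathbb Z_s^{\,|I|}$, set $\mathcal M=V_I^{\perp}$, a nontrivial subgroup, and define $d\colon\mathcal M\to\mathbb Z_s$ by $d(\gamma)=\gamma\cdot w_0$; this is well defined because every $\gamma\in\mathcal M$ annihilates $V_I$ (so $\gamma\cdot w$ is constant on $W_I$) and is visibly a homomorphism, and one has $W_I=\{w:\gamma\cdot w=d(\gamma)\ \forall\gamma\in\mathcal M\}$. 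Translating back to multiplicative notation, the support of $\mathcal F_I$ is $\{X^\gamma=\omega_{d(\gamma)}:\gamma\in\mathcal M\}$, which is exactly a regular fraction of the $s^{|I|}$ design in the sense of Definition \ref{def:reg}.

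I expect the only genuinely delicate points to be the bookkeeping of the ``apart from multiplicity'' clause --- i.e.\ checking that projection turns a set into a multiset with uniform fibre size --- and verifying in the proper-subspace case that the recovered map $d$ is a well-defined group homomorphism, so that Definition \ref{def:reg} applies verbatim; the consistency of the original linear system coming from the homomorphism property of $e$ is the other place where primality of $s$ and the group structure are essential.
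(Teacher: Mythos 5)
Your proof is correct, but it takes a genuinely different route from the paper's. The paper works at the level of ideals: it observes that $\mathcal I(\mathcal F)$ is a binomial ideal (generated by $x_j^s-1$ and the defining binomials), invokes elimination theory to conclude that $\mathcal I(\mathcal F_I)=\mathcal I(\mathcal F)\cap\mathbb C[x_i:i\in I]$ is again binomial, and then splits on whether the eliminated ideal is generated by the $x_j^s-1$ alone (full factorial) or has extra generators (regular fraction). You instead pass to exponents, where the defining equations become an affine-linear system over the field $\mathbb Z_s$, and reduce everything to the fact that a linear map sends affine subspaces to affine subspaces; the double-annihilator identity $(V_I^{\perp})^{\perp}=V_I$ over $\mathbb Z_s$ then converts the image back into a system of defining equations $\{X^\gamma=\omega_{d(\gamma)}:\gamma\in\mathcal M\}$ of the form required by Definition \ref{def:reg}. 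Your version is more elementary and, in two respects, more complete than the paper's: it produces the generating subgroup $\mathcal M=V_I^{\perp}$ and homomorphism $d$ of the projected fraction explicitly, and it actually proves the ``apart from the multiplicity'' clause by identifying each fibre as a coset of the fixed subgroup $\mathcal L^{\perp}\cap\ker p_I$ --- a point the paper's proof passes over in silence (and where its appeal to ``the results in Chapter 3'' for binomiality of elimination ideals, and its final sentence deducing regularity from ``the definition of ideal,'' are both rather loose). What the paper's route buys in exchange is brevity and consistency with the elimination-ideal machinery it sets up in Appendix A and uses elsewhere; what yours buys is a self-contained argument whose only external input is linear algebra over $\mathbb Z_s$, at the price of leaning on the primality of $s$ to make $\mathbb Z_s$ a field --- a hypothesis that is, in any case, in force throughout the paper.
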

\begin{proof}
Let ${\overline I} = \{1 , \ldots, m\} \setminus I$. Let us define the ideal ${\mathcal I}({\mathcal F})$ as the ideal in ${\mathbb C}[x_1, \ldots, x_m]$ generated by the binomials $x_j^s-1=0$, $j=1, \ldots, m$ and by the generating equations of ${\mathcal F}$. The ideal ${\mathcal I}({\mathcal F})$ is a binomial ideal. In fact, two factors or interactions are either orthogonal or totally confused and this yields only binomial equations. The ideal ${\mathcal I}({\mathcal F}_I)$ is the elimination ideal of ${\mathcal I}({\mathcal F})$ with respect to the variables $x_j$, $j \in {\overline I}$. From the results in Chapter 3 of \cite{cox|little|oshea:15}, ${\mathcal I}({\mathcal F}_I)$ is also a binomial ideal and two cases may arise: (a) the binomials $x_j^s-1=0$, $j \in I$ generate ${\mathcal I}({\mathcal F}_I)$, and this means that ${\mathcal F}_I$ is a full factorial design on the $I$ factors; (b) there are other generators. From the definition of ideal, such polynomials define ${\mathcal F}_I$ as a regular fraction.
\end{proof}

\begin{remark}  \label{rem-sempl}
To ease computations, remember that a defining equation with a given number of factors cannot include simultaneously all the factors of a defining equation with a lower number of factors.
\end{remark}

\begin{example} \label{ex55}
Consider the $5^{5-2}$ regular fraction defined by
\begin{equation} \label{def5a5}
X_1^2X_2X_3 =\omega_1\, \qquad \qquad X_1X_2X_4X_5 = \omega_1
\end{equation}
and a fractions obtained by permuting the levels of two factors. The indicator function of this fraction is written in Appendix B and has $289$ nonzero monomials. We apply our technique to the new fraction and we show how to recover the defining equations \eqref{def5a5} and the correct permutations starting from the fraction points.

First we check the defining equations with $3$ factors. We obtain a valid Latin square only with $X_1, X_2, X_3$ as in Figure \ref{figls5}.
\begin{figure}
    \begin{center}
    \begin{tabular}{c|c|c|c|c|c|}
  \multicolumn{1} {c} { }    &
  \multicolumn{1} {c} {0} &  \multicolumn{1} {c} {1} &  \multicolumn{1} {c} {2} &  \multicolumn{1} {c} {3} &  \multicolumn{1} {c} {4} \\
  \cline{2-6}
0 & 1 & 0 & 4 & 3 & 2  \\  \cline{2-6}
1 & 4 & 3 & 2 & 1 & 0  \\ \cline{2-6}
2 & 3 & 2 & 1 & 0 & 4  \\ \cline{2-6}
3 & 0 & 4 & 3 & 2 & 1  \\ \cline{2-6}
4 & 2 & 1 & 0 & 4 & 3 \\ \cline{2-6}
\end{tabular}\end{center}
\caption{\label{figls5}The Latin square $X_3(X_1,X_2)$ for Example \ref{ex55}.}
\end{figure}

Here the minors are all zero and therefore no permutation on $X_3$ is needed. Now, from the column beginning with $\omega_0$ we read the permutation of $X_1$. It is easy to see that it is $X_1^3$ with the switch permutation $(\omega_2,\omega_4)$. From the row beginning with $\omega_0$ we obtain the permutation $X_2^4$. Thus we have $X_3=\omega_1 \pi_1(X_1)^3X_2^4$, or equivalently $\pi_1(X_1)^2X_2X_3=\omega_1$. Notice that this constant term $\omega_1$ can be easily recovered from the Latin square above, since it is the symbol in the upper-left position, where $\pi_1(X_1)=X_2=\omega_0$ and therefore $X_3$ is equal to the constant term on the right side hand of the defining equation.

As no other defining equations with $3$ factors can be obtained, we move to the interactions of order $4$. There are few checks to do at this stage, because there are only $5$ subsets with $4$ factors and two of them are impossible: $\{X_1,X_2,X_3,X_4\}$  and $\{X_1,X_2,X_3,X_4\}$ can be excluded as they contain $\{X_1,X_2,X_3\}$, see Remark \ref{rem-sempl}. Consider the $4$-tuple $\{X_2,X_3,X_4,X_5\}$. We look at the layers defined by $X_5$ and we obtain the five Latin squares in Figure \ref{figls6}.
\begin{figure}
\begin{center}
\begin{tabular}{ccc}
\begin{tabular}{c|c|c|c|c|c|}
  \multicolumn{1} {c} { }    &
  \multicolumn{1} {c} {0} &  \multicolumn{1} {c} {1} &  \multicolumn{1} {c} {2} &  \multicolumn{1} {c} {3} &  \multicolumn{1} {c} {4}  \\
  \cline{2-6}
0 &   1 &   4  &  2 &   0 &   3  \\  \cline{2-6}
1 &   3 &   1 &   4  &  2 &   0 \\  \cline{2-6}
2 &   0 &   3 &   1  &  4 &   2 \\  \cline{2-6}
3 &   2 &   0 &  3  &  1 &   4 \\  \cline{2-6}
4 &  4  &  2 &   0  &  3  &  1 \\  \cline{2-6}
\end{tabular} &
\begin{tabular}{c|c|c|c|c|c|}
  \multicolumn{1} {c} { }    &
  \multicolumn{1} {c} {0} &  \multicolumn{1} {c} {1} &  \multicolumn{1} {c} {2} &  \multicolumn{1} {c} {3} &  \multicolumn{1} {c} {4}  \\
  \cline{2-6}
0 &   3 &   1 &   4 &   2 &   0  \\  \cline{2-6}
1 &   0  &  3  &  1 &   4 &   2 \\  \cline{2-6}
2 &   2 &   0  &  3 &   1 &   4 \\  \cline{2-6}
3 &   4 &   2 &   0 &   3 &   1 \\  \cline{2-6}
4 &   1  &  4 &   2 &   0 &   3   \\  \cline{2-6}
\end{tabular} &
\begin{tabular}{c|c|c|c|c|c|}
  \multicolumn{1} {c} { }    &
  \multicolumn{1} {c} {0} &  \multicolumn{1} {c} {1} &  \multicolumn{1} {c} {2} &  \multicolumn{1} {c} {3} &  \multicolumn{1} {c} {4}  \\
  \cline{2-6}
0 &  2 &   0 &   3  &  1 &   4 \\  \cline{2-6}
1 &   4 &   2 &   0 &   3 &   1 \\  \cline{2-6}
2 &   1  &  4  &  2  &  0 &   3 \\  \cline{2-6}
3 &   3  &  1  &  4  &  2 &   0 \\  \cline{2-6}
4 &   0 &   3  &  1  &  4  &  2   \\  \cline{2-6}
\end{tabular} \\
 $X_5=\omega_0$ &  $X_5=\omega_1$ &  $X_5=\omega_2$ \\
 \end{tabular}

\medskip
\begin{tabular}{ccc}
\begin{tabular}{c|c|c|c|c|c|}
  \multicolumn{1} {c} { }    &
  \multicolumn{1} {c} {0} &  \multicolumn{1} {c} {1} &  \multicolumn{1} {c} {2} &  \multicolumn{1} {c} {3} &  \multicolumn{1} {c} {4}  \\
  \cline{2-6}
0 &  0 &   3 &   1 &   4  &  2 \\  \cline{2-6}
1  &  2 &   0 &   3 &   1  &  4 \\  \cline{2-6}
2  &  4  &  2 &  0  &  3  &  1 \\  \cline{2-6}
3  &  1 &  4  &  2 &   0  &  3 \\  \cline{2-6}
4 &   3 &  1  &  4 &   2 &   0 \\  \cline{2-6}
\end{tabular} &
\begin{tabular}{c|c|c|c|c|c|}
  \multicolumn{1} {c} { }    &
  \multicolumn{1} {c} {0} &  \multicolumn{1} {c} {1} &  \multicolumn{1} {c} {2} &  \multicolumn{1} {c} {3} &  \multicolumn{1} {c} {4}  \\
  \cline{2-6}
0 &  4 &   2 &   0 &   3 &   1  \\  \cline{2-6}
1 &   1 &   4 &   2 &   0  &  3  \\  \cline{2-6}
2  &  3 &   1 &   4  &  2 &   0  \\  \cline{2-6}
3  &  0 &   3 &   1  &  4 &   2  \\  \cline{2-6}
4  &  2 &   0 &   3  &  1 &  4  \\  \cline{2-6}
\end{tabular} \\
 $X_5=\omega_3$ &  $X_5=\omega_4$ \\
 \end{tabular}
 \end{center}
\caption{\label{figls6}The layers $X_4(X_2,X_3)$ given $X_5$ for Example \ref{ex55}.}
\end{figure}

In all these Latin squares the $2 \times 2$ minors are all zero, and thus no permutation is needed on $X_4$. Looking at the first Latin square, one reads the permutation $(\omega_2,\omega_0,\omega_3,\omega_1,\omega_4)$ for $X_2$, i.e., $\omega_1X_2^3$, and the permutation $(\omega_1,\omega_3,\omega_0,\omega_2,\omega_4)$ for $X_3$, i.e., $\omega_2X_3^2$. Then the equation is $\pi_5(X_5)X_4=\omega_3 X_2^2X_3^3$, and this equation holds in all the five Latin squares in Figure \ref{figls6}. We only need to find the possible permutation $\pi_5$ on $X_5$. This is accomplished by looking at the constant terms in the upper-left cell. From the five Latin squares we read
\begin{multline*}
\pi_5(\omega_0)\omega_1 = \omega_3 \quad \pi_5(\omega_1)\omega_3 = \omega_3 \quad \pi_5(\omega_2)\omega_2 = \omega_3 \quad
 \\
\pi_5(\omega_3)\omega_0 = \omega_3 \quad \pi_5(\omega_4)\omega_4 = \omega_3
\end{multline*}
and these equations are satisfied when $\pi_5=(\omega_2,\omega_0,\omega_1,\omega_3,\omega_4)$. Therefore, applying $\pi_5$ to $X_5$ we have the defining equation $X_2^3X_3^2X_4\pi_5(X_5)=\omega_3$. Finally, we check that this defining equation corresponds to the second equation used in Eq. \eqref{def5a5} to define the array. Indeed, from $\pi_1(X_1)^2X_2X_3=\omega_1$ we have $X_3=\omega_1\pi_1(X_1)^3X_2$, and replacing this expression of $X_3$ into the previous equation one obtains immediately the defining equation $\pi_1(X_1)X_2X_4\pi_5(X_5) = \omega_1$.
\end{example}

\section{Conclusions} \label{future-sect}

In this paper we addressed the problem of level permutations for qualitative factors. In particular we presented two tools to check if a fraction of a $s^m$ factorial design is isomorphic or not to a regular fraction by  permutations of factor levels.  Such a problem is very important in the applications because of the special property of not partial confounding that have the regular fractions. In this framework, the coding of levels by the $s$-th roots of the unity and some tools of algebraic statistics are essential.

Future works will concern the case of designs with non prime number of levels and mixed designs, where several properties of the roots of the unity do not hold, and therefore for this class of designs a different approach must be implemented.

Moreover we want to deepen and better define the concept of mean aberration, already introduced in
\cite{fontana|rapallo|rogantin:16}. In particular, we want to limit the mean only to permutations compatible with the
design matrix of the fraction under investigation. In fact, the aberrations are calculated through the level counts,
and they are connected to each other by a convolution formula presented in the aforementioned article. This new
definition could allow us to clarify which aberrations are compatible with those of permuted regular fractions.

Finally, in order to generalize the algorithms presented in this paper for specific examples, we want to provide efficient
packages, both in symbolic and statistical software (i.e., CoCoA and R respectively), to make actual computations regarding regularity and isomorphism checks under level permutations in a general setting.

\section*{Appendix A. Basic facts in Computational Commutative Algebra}

In this appendix we collect some basic notions of Computational Commutative Algebra used in the paper. The structure of this appendix is taken from \cite{gibilisco|etal:10} (Section $1.7$ with Roberto Notari). Introductory expositions of the subject can be found in, e.g, \cite{cox|little|oshea:15} and \cite{kreuzer|robbiano:08}.

Let ${\mathbb K}$ be a numeric field. In our applications we consider ${\mathbb K}={\mathbb C}$ or ${\mathbb K}={\mathbb R}$ or ${\mathbb K}={\mathbb Z}_p$, the finite field with $p$ elements ($p$ prime). Let $ R ={\mathbb K}[x_1, \dots, x_m] $ be the ring of the polynomials in the variables $ x_1, \dots,x_m$ and with coefficients in ${\mathbb K}$. The ring operations in $R$ are the usual sum and product of polynomials.

\begin{definition} A subset $ I \subset R $ is an {\em ideal} if $ f + g
\in I $ for all $ f, g \in I $ and $ f g \in I $ for all $ f \in I
$ and all $ g \in R$.
\end{definition}

\begin{proposition} Let $ f_1, \dots, f_r \in R $.
The set $ \langle f_1, \dots, f_r \rangle = \{ f_1 g_1 + \dots +
f_r g_r : g_1, \dots, g_r \in R \} $ is the smallest ideal in
$ R $ with respect to the inclusion that contains $ f_1, \dots,
f_r$. The ideal $ \langle f_1, \dots, f_r \rangle$ is called the {\em ideal generated by} $ f_1, \dots, f_r$.
\end{proposition}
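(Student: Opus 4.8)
The plan is to establish the proposition in three short steps, matching the two assertions in the statement: that $I := \langle f_1, \dots, f_r \rangle$ is itself an ideal containing $f_1, \dots, f_r$, and that it is contained in every ideal with this property (hence smallest with respect to inclusion). All three steps are direct verifications against the definition of ideal given immediately above, so I expect no substantive difficulty.

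First I would check that $I$ is an ideal. Given two elements $p = \sum_{i=1}^r f_i g_i$ and $q = \sum_{i=1}^r f_i h_i$ of $I$, their sum is $p + q = \sum_{i=1}^r f_i (g_i + h_i)$, which is again of the required form since each $g_i + h_i \in R$; thus $I$ is closed under addition. For the absorption property, if $p = \sum_{i=1}^r f_i g_i \in I$ and $g \in R$, then $g p = \sum_{i=1}^r f_i (g g_i)$, and each $g g_i \in R$, so $g p \in I$. These are exactly the two conditions required by the definition of ideal. Next I would exhibit the generators inside $I$: choosing $g_j = 1$ and $g_i = 0$ for $i \neq j$ gives $f_j = \sum_{i=1}^r f_i g_i \in I$, using that $0, 1 \in R$.

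Finally, for minimality, I would let $J$ be any ideal of $R$ containing $f_1, \dots, f_r$ and show $I \subseteq J$. For an arbitrary element $p = \sum_{i=1}^r f_i g_i$ of $I$, the absorption property of $J$ gives $f_i g_i \in J$ for each $i$ (since $f_i \in J$ and $g_i \in R$), and closure of $J$ under addition then yields $p = \sum_{i=1}^r f_i g_i \in J$. Hence $I \subseteq J$. Combined with the first two steps, this shows that $I$ is an ideal containing the $f_i$ which sits inside every ideal with that same property, which is precisely the assertion that it is the smallest one.

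The only point requiring mild care is the logical reading of the word \emph{smallest}: it demands both that $I$ itself qualify, namely be an ideal containing the generators, and that it be minimal among all qualifying ideals. There is no genuine obstacle, as the argument uses nothing beyond the ring axioms and the definition of ideal; the proof is essentially a bookkeeping exercise ensuring that each of these two halves is addressed.
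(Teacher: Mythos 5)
Your proof is correct and complete: you verify that $\langle f_1,\dots,f_r\rangle$ is closed under addition and absorbs multiplication by elements of $R$, that it contains each generator, and that it lies inside any ideal containing the $f_i$ — which is exactly what ``smallest with respect to inclusion'' requires. The paper states this proposition without proof, as part of the background material collected in Appendix A, so there is no argument to compare against; your verification is the standard one and fills that role correctly.
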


A key theorem in the theory of ideals in a polynomial ring is Hilbert's basis theorem, which states that every ideal in $R$ is finitely generated.
\begin{theorem} Given an ideal $ I \subset R,$ there exist $ f_1,
\dots, f_r \in I $ such that $ I = \langle f_1, \dots, f_r \rangle$.
\end{theorem}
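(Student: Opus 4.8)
The plan is to prove the more general statement that if $S$ is a \emph{Noetherian} ring---meaning every ideal of $S$ is finitely generated---then the polynomial ring $S[x]$ in one variable is again Noetherian, and then to deduce the theorem by induction on the number of variables. Since ${\mathbb K}$ is a field, its only ideals are $\langle 0 \rangle$ and $\langle 1 \rangle$, so ${\mathbb K}$ is trivially Noetherian; writing $R = {\mathbb K}[x_1, \ldots, x_{m-1}][x_m]$ and applying the one-variable statement $m$ times then yields that $R$ itself is Noetherian, which is exactly the assertion of the theorem.

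For the one-variable step, let $I \subseteq S[x]$ be an ideal and introduce, for each integer $d \geq 0$, the set $L_d \subseteq S$ consisting of $0$ together with all leading coefficients of polynomials in $I$ of degree exactly $d$. One checks directly that each $L_d$ is an ideal of $S$, and moreover that $L_0 \subseteq L_1 \subseteq L_2 \subseteq \cdots$, because multiplying a degree-$d$ element of $I$ by $x$ produces a degree-$(d+1)$ element of $I$ with the same leading coefficient.

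Next I would invoke the Noetherian hypothesis on $S$ twice. First, the union $L = \bigcup_d L_d$ is an ideal of $S$, hence finitely generated; since its generators already lie in some $L_N$ and the $L_d$ increase, the chain stabilizes, so $L_N = L_{N+1} = \cdots = L$. Second, each of the finitely many ideals $L_0, \ldots, L_N$ is finitely generated: choose polynomials $f_{d,1}, \ldots, f_{d,n_d} \in I$ of degree $d$ whose leading coefficients generate $L_d$, for $d = 0, \ldots, N$. The claim is that this finite collection generates $I$.

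To prove the claim I would argue by induction on the degree of an arbitrary $g \in I$ that $g$ lies in the ideal $J$ generated by the chosen $f_{d,j}$. Let $b$ be the leading coefficient of $g$ and $e = \deg g$. If $e \geq N$, then $b \in L = L_N$, so $b$ is an $S$-combination of the leading coefficients of the $f_{N,j}$, and subtracting the corresponding combination of $x^{e-N} f_{N,j}$ from $g$ cancels the leading term and yields an element of $I$ of strictly smaller degree. If $e < N$, the same cancellation is performed using the $f_{e,j}$ directly. In either case the induction hypothesis applies to the reduced polynomial, the base case being $g = 0 \in J$. The expected main obstacle is organizing this degree-reduction bookkeeping cleanly---in particular, recognizing that one must treat the finitely many low degrees $0, \ldots, N-1$ through their own ideals $L_d$ rather than through $L$ alone, since for $e < N$ the coefficient $b$ need not be expressible via the top-degree generators. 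Once termination of the reduction is guaranteed by the well-ordering of the degrees, the theorem follows.
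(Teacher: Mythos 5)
Your argument is correct: it is the classical proof of Hilbert's basis theorem, reducing by induction on the number of variables to the statement that $S[x]$ is Noetherian whenever $S$ is, via the increasing chain of leading-coefficient ideals $L_0\subseteq L_1\subseteq\cdots$, its stabilization at some $L_N$, and the degree-reduction argument using generators chosen separately in each degree $0,\ldots,N$. The paper itself gives no proof of this theorem: it is quoted as Hilbert's basis theorem, with the cited references (Cox--Little--O'Shea, Kreuzer--Robbiano) standing in for the argument. Those sources, consistent with the Gr\"obner-basis machinery the appendix develops immediately afterwards, typically take a different route: one proves Dickson's lemma (every monomial ideal is finitely generated), deduces that $\operatorname{LT}(I)$ is generated by $\operatorname{LT}(f_1),\ldots,\operatorname{LT}(f_t)$ for some $f_i\in I$, and then uses the division algorithm to show these $f_i$ generate $I$; that version establishes the existence of Gr\"obner bases at the same time, which is what the rest of the appendix actually relies on. Your route is more elementary and more general --- it works over any Noetherian coefficient ring and needs no term ordering --- but it does not produce a Gr\"obner basis as a byproduct. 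The only cosmetic point is that the induction on $\deg g$ should be phrased as strong induction (or as termination of the reduction under the well-ordering of degrees), since a single cancellation step may lower the degree by more than one.
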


The intersection of two ideals is an ideal. The union of two ideals is not an ideal in general, but the following definition can be stated.

\begin{definition} Let $ I, J \subset R $ be ideals. Then, $$ I +
J = \{ f + g : f \in I, g \in J \} $$ is the smallest ideal in
$ R $ with respect to inclusion that contains both $ I $ and $ J,$
and it is called the sum of $ I $ and $ J$.
\end{definition}

Given an ideal $I$, an equivalence relation is naturally defined on $R$: $f \sim_I g $ if $ f - g \in I $ for $ f, g \in R$. This relation is compatible with the ring operations: if $ f_1 \sim_I f_2, g_1 \sim_I g_2 $ then $ f_1 + g_1
\sim_I f_2 + g_2 $ and $ f_1 g_1 \sim_I f_2 g_2$. This fact implies that the quotient space $R/I$ can be defined and it is a ring with the operations inherited from $R$.

The definitions below are the starting point of the computational side of Commutative Algebra.

\begin{definition} A {\em term} in $ R $ is $ x^a = x_1^{a_1} \dots
x_m^{a_m} $ for $ a = (a_1, \dots, a_m) \in {\mathbb N}^m$. The set of terms is denoted with $ {\mathbb{T} }^m$.
\end{definition}

\begin{definition} A term-ordering is a {\em well ordering} $ \preccurlyeq $
on $ {\mathbb{T} }^m $ such that $ 1 \preccurlyeq x^a $ for every $ x^a \in
{\mathbb{T} }^m $ and $ x^a \preccurlyeq x^b $ implies $ x^a x^c
\preccurlyeq x^b x^c $ for every $ x^c \in {\mathbb{T} }^m$.
\end{definition}

A polynomial in $ R $ is a linear combination of a finite set of terms in $ {\mathbb{T} }^m:$ $ f = \sum_{a \in A} c_a
x^a $ where $ A $ is a finite subset of ${\mathbb N}^m$.

\begin{definition} Let $ f \in R $ be a polynomial,   $ A $
 the finite set formed by the terms in $ f$ and
 $ x^b = \max \{ x^a : a \in A \}$.  Let $ I \subset R $ be an ideal.
\begin{enumerate}
\item The term
$\operatorname{LT}(f) = c_b x^b $ is called the {\em leading term} of $ f$.
\item  The ideal generated by $\operatorname{LT}(f) $ for every $ f
\in I$ is called the {\em order ideal} of $I$ and is indicated as $\operatorname{LT(I)}$.
\end{enumerate}
\end{definition}

\begin{definition} Let $ I \subset R $ be an ideal and let $ f_1,
\dots, f_t \in I$. The set $ \{ f_1, \dots, f_t \} $ is a {\em Gr\"obner basis} of $ I $ with respect to $ \preccurlyeq
$ if
\[ \operatorname{LT}(I) = \langle \operatorname{LT}(f_1), \dots,\operatorname{LT}(f_t) \rangle.
\]
\end{definition}

Gr\"obner bases are special sets of generators for ideals in $R$ with several applications. We list here only the results used in the paper.

\begin{proposition} Let $ I \subseteq R $ be an ideal. Then, $ I =
R $ if, and only if, $ 1 \in {\mathcal {F} },$ where $ {\mathcal {F} } $ is a Gr\"obner
basis of $ I,$ with respect to any term-ordering $ \preccurlyeq $.
\end{proposition}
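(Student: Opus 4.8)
The plan is to prove the two implications separately, using only the defining property of a Gr\"obner basis, namely that $\mathcal{F}=\{f_1,\dots,f_t\}\subseteq I$ and $\operatorname{LT}(I)=\langle \operatorname{LT}(f_1),\dots,\operatorname{LT}(f_t)\rangle$.

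First I would dispatch the implication $1\in\mathcal{F}\Rightarrow I=R$. Since every element of a Gr\"obner basis lies in $I$, the hypothesis $1\in\mathcal{F}$ gives $1\in I$. The absorbing property of an ideal then yields $g=g\cdot 1\in I$ for every $g\in R$, whence $I=R$. This direction needs no term ordering at all.

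The reverse implication $I=R\Rightarrow 1\in\mathcal{F}$ is where the Gr\"obner machinery enters. Assuming $I=R$, I would observe that $1\in I$, so the constant polynomial $1$ is an element of $I$ and its leading term is $\operatorname{LT}(1)=1$; hence $1\in\operatorname{LT}(I)$. Applying the defining equality of the Gr\"obner basis, $1\in\langle \operatorname{LT}(f_1),\dots,\operatorname{LT}(f_t)\rangle$, i.e. the constant $1$ lies in a monomial ideal. The crux is then a divisibility argument: a monomial ideal contains the constant term only if one of its monomial generators divides $1$, which forces some $\operatorname{LT}(f_i)$ to be a constant. Because the term ordering satisfies $1\preccurlyeq x^a$ for every $x^a\in\mathbb{T}^m$, the term $1$ is the minimum of $\mathbb{T}^m$; so if the maximal (leading) term of $f_i$ equals $1$, every term of $f_i$ equals $1$, and $f_i$ is itself a nonzero constant. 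Over the field $\mathbb{K}$ this constant is a unit, and after the standard normalization of Gr\"obner basis elements to be monic it is precisely $1$, giving $1\in\mathcal{F}$.

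I expect the main obstacle to be the careful handling of the step that turns ``$1$ belongs to the monomial ideal $\langle\operatorname{LT}(f_1),\dots,\operatorname{LT}(f_t)\rangle$'' into ``some $f_i$ is a nonzero constant''. This requires two facts that must be invoked explicitly: that membership of a term in a monomial ideal is governed by divisibility by one of the generators, and that the well-ordering hypothesis $1\preccurlyeq x^a$ makes $1$ the least term, so that a constant leading term can only occur for a constant polynomial. The field hypothesis on $\mathbb{K}$ is also essential, since it guarantees the constant is invertible and justifies the identification with $1$; over a general coefficient ring this identification could fail.
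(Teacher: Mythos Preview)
The paper does not supply its own proof of this proposition: it appears in Appendix~A as one of several standard facts from computational commutative algebra quoted without argument, with the reader referred to textbooks such as \cite{cox|little|oshea:15} and \cite{kreuzer|robbiano:08}. So there is no paper proof to compare against, and your task reduces to giving a correct self-contained argument.

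Your argument is correct and is exactly the standard one. Both implications are handled properly: the easy direction uses only $\mathcal{F}\subseteq I$, and for the substantive direction you correctly pass from $1\in I$ to $1\in\operatorname{LT}(I)=\langle\operatorname{LT}(f_1),\dots,\operatorname{LT}(f_t)\rangle$, invoke monomial-ideal divisibility to force some $\operatorname{LT}(f_i)$ to be constant, and use the minimality of $1$ under $\preccurlyeq$ to conclude $f_i$ is a nonzero constant. You also correctly flag the one genuine subtlety: with the paper's bare definition of Gr\"obner basis (no monic normalization, no reducedness), the most one can conclude is that some $f_i$ is a nonzero constant in $\mathbb{K}$, not literally the element $1$. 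The proposition as stated is therefore implicitly assuming the usual convention (adopted by CoCoA and most references) that Gr\"obner basis elements are taken monic; your remark about normalization is the right way to close that gap, and it would be worth stating it as an explicit hypothesis rather than an afterthought.
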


The theorem below is known as the {\em finiteness theorem} and is used to determine whether a system of polynomial equations has a finite number of solutions. Here $V(I)$ is the variety defined by $I$, i.e., the set of the affine points $(x_1, \ldots, x_m)$ in ${\mathbb K}^m$ such that $f(x_1, \ldots x_m)=0$ for all $f \in I$.

\begin{theorem} \label{0dimcriterion:prop}
Let $I \subset R={\mathbb K}[x_1, \ldots, x_m]$ be an ideal and fix a term-ordering $ \preccurlyeq $ on $R$. If ${\mathbb K}$ is algebraically closed, then the following five statements are equivalent:
\begin{itemize}
\item[(i)] For each $i=1,\ldots m$, there is some $c_i>0$ such that $x_i^{c_i} \in LT(I)$.

\item[(ii)] Let $G$ be a Gr\"obner basis of $I$ with respect to $ \preccurlyeq $. Then for each $i=1, \ldots, m$, there is some $c_i>0$ such that $x_i^{c_i}= LT(f)$ for some $f \in G$.

\item[(iii)] The set $\{x^{\alpha} : x^{\alpha} \notin LT(I)\}$ is finite.

\item[(iv)] The set $R/I$ is finite-dimensional as ${\mathbb K}$-vector space.

\item[(v)] $V(I)$ is a finite set.
\end{itemize}
\end{theorem}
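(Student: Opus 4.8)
The plan is to prove the five equivalences by closing the cycle of implications
\[
(\mathrm{i}) \Longleftrightarrow (\mathrm{ii}) \Longrightarrow (\mathrm{iii}) \Longleftrightarrow (\mathrm{iv}) \Longrightarrow (\mathrm{v}) \Longrightarrow (\mathrm{i}),
\]
so that each statement entails every other. Throughout I would rely on two structural facts about Gr\"obner bases already recorded in Appendix A. First, a Gr\"obner basis $G$ of $I$ satisfies $\operatorname{LT}(I)=\langle \operatorname{LT}(g):g\in G\rangle$, so that $\operatorname{LT}(I)$ is a \emph{monomial} ideal generated by finitely many terms. Second, the \emph{standard monomials}, namely the terms $x^\alpha\notin\operatorname{LT}(I)$, descend to a $\mathbb K$-basis of $R/I$; this is the content of the division algorithm, since every class in $R/I$ has a unique normal-form representative supported on standard monomials.

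For $(\mathrm{ii})\Rightarrow(\mathrm{i})$ I would simply note that $\operatorname{LT}(g)\in\operatorname{LT}(I)$. For the converse $(\mathrm{i})\Rightarrow(\mathrm{ii})$, I would use that a monomial belongs to a monomial ideal exactly when it is divisible by one of the generators: from $x_i^{c_i}\in\operatorname{LT}(I)=\langle \operatorname{LT}(g):g\in G\rangle$ some $\operatorname{LT}(g)$ must divide $x_i^{c_i}$, and every divisor of a pure power $x_i^{c_i}$ is itself a pure power $x_i^{d_i}$, so $\operatorname{LT}(g)=x_i^{d_i}$ is the leading term of a basis element. The implication $(\mathrm{ii})\Rightarrow(\mathrm{iii})$ is then a counting argument: if $x_i^{c_i}\in\operatorname{LT}(I)$ for every $i$, any standard monomial $x^\alpha$ must satisfy $\alpha_i<c_i$ for all $i$ (otherwise $x_i^{c_i}$ divides $x^\alpha$), so there are at most $\prod_{i=1}^m c_i$ standard monomials.

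The equivalence $(\mathrm{iii})\Leftrightarrow(\mathrm{iv})$ is immediate from the basis fact above, because $\dim_{\mathbb K}(R/I)$ equals the number of standard monomials; one is finite precisely when the other is. For $(\mathrm{iv})\Rightarrow(\mathrm{v})$ I would use finite-dimensionality to force, for each variable, a linear dependence among the classes $1,x_i,x_i^2,\ldots$ in $R/I$, i.e.\ a nonzero univariate polynomial $p_i(x_i)\in I$. Every point of $V(I)$ has its $i$-th coordinate among the finitely many roots of $p_i$, so $V(I)$ sits inside a finite product of finite sets and is finite; note this direction does not need algebraic closure.

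The remaining and genuinely hardest step is $(\mathrm{v})\Rightarrow(\mathrm{i})$, where the assumption that $\mathbb K$ is algebraically closed becomes indispensable. Here I would let $\{a_{i,1},\ldots,a_{i,k_i}\}$ be the finite image of $V(I)$ under the $i$-th coordinate projection and set $q_i(x_i)=\prod_{j=1}^{k_i}(x_i-a_{i,j})$. By construction $q_i$ vanishes on $V(I)$, so $q_i$ lies in the ideal of all polynomials vanishing on $V(I)$; invoking the Hilbert Nullstellensatz, this ideal equals $\sqrt I$, whence $q_i^{N}\in I$ for some $N$. Since $q_i^{N}$ is univariate in $x_i$, its leading term is the pure power $x_i^{N\deg q_i}$, giving $x_i^{N\deg q_i}\in\operatorname{LT}(I)$, which is $(\mathrm{i})$. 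I expect this to be the main obstacle precisely because it is the only implication that climbs from the geometry of $V(I)$ back to a membership statement in $\operatorname{LT}(I)$, and it is exactly the Nullstellensatz — the ingredient that fails over non-closed fields — that makes this passage legitimate.
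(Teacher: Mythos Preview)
The paper does not supply its own proof of this theorem: it is stated in Appendix~A as a standard background result in Computational Commutative Algebra, with the reader referred to \cite{cox|little|oshea:15} and \cite{kreuzer|robbiano:08} for details. There is therefore no in-paper argument to compare against.

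That said, your proof is correct and is essentially the textbook argument (cf.\ Cox--Little--O'Shea, Chapter~5, \S3). The cycle $(\mathrm{i})\Leftrightarrow(\mathrm{ii})\Rightarrow(\mathrm{iii})\Leftrightarrow(\mathrm{iv})\Rightarrow(\mathrm{v})\Rightarrow(\mathrm{i})$ closes as you describe, the use of the Nullstellensatz for $(\mathrm{v})\Rightarrow(\mathrm{i})$ is the right tool, and your identification of that step as the one requiring algebraic closure is accurate. One cosmetic point: in the step $(\mathrm{v})\Rightarrow(\mathrm{i})$ you should handle the degenerate case $V(I)=\varnothing$ separately, since then your $q_i$ is the empty product $1$ and $N\deg q_i=0$, which does not directly give $c_i>0$; but of course the weak Nullstellensatz then yields $1\in I$, hence $I=R$ and $x_i\in\operatorname{LT}(I)$, so the conclusion still holds.
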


\begin{definition} Let $ I \subset R $ be an ideal. A polynomial $
f = \sum_{a \in A} c_a x^a $ is in {\em normal form} with respect to $
\preccurlyeq $ and $ I $ if $ x^a \notin \operatorname{LT}(I) $ for each $ a \in
A$.
\end{definition}

\begin{proposition} Let $ I \subset R $ be an ideal. For every $ f
\in R $ there exists a unique polynomial, indicated as $\operatorname{NF}(f) \in R $, in normal form with
respect to $ \preccurlyeq $ and $ I $ such that $ f - \operatorname{NF}(f) \in
I.$
\end{proposition}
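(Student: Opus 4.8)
The plan is to produce $\operatorname{NF}(f)$ explicitly by reducing $f$ modulo a Gr\"obner basis, and then to establish uniqueness by a leading-term argument. By Hilbert's basis theorem $I$ is finitely generated, and from the definition of Gr\"obner basis one may fix a finite set $G=\{g_1,\dots,g_t\}\subset I$ with $\operatorname{LT}(I)=\langle \operatorname{LT}(g_1),\dots,\operatorname{LT}(g_t)\rangle$. Because $\operatorname{LT}(I)$ is then a monomial ideal generated by the $\operatorname{LT}(g_i)$, a term $x^a$ lies in $\operatorname{LT}(I)$ if and only if $x^a$ is divisible by $\operatorname{LT}(g_i)$ for some $i$. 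Thus ``being in normal form'' is equivalent to ``having no term divisible by any $\operatorname{LT}(g_i)$,'' which is exactly the stopping condition of the multivariate division algorithm with divisor set $G$.

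For existence, I would run this division algorithm on $f$: as long as the current polynomial $p$ has a term $c\,x^a$ with $x^a$ divisible by some $\operatorname{LT}(g_i)$, replace $p$ by $p-\frac{c\,x^a}{\operatorname{LT}(g_i)}\,g_i$, which cancels that term and introduces only terms strictly $\preccurlyeq$-smaller than $x^a$. Each step subtracts a polynomial multiple of some $g_i\in I$, so the accumulated subtracted quantity stays in $I$. Since every term-ordering is a well-ordering, there is no infinite strictly decreasing chain of leading terms, so the reduction terminates after finitely many steps in a polynomial $r$ none of whose terms is divisible by any $\operatorname{LT}(g_i)$. By the equivalence above, $r$ is in normal form, and by construction $f-r\in I$; I set $\operatorname{NF}(f)=r$.

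For uniqueness, suppose $r_1$ and $r_2$ are both in normal form and satisfy $f-r_1\in I$ and $f-r_2\in I$. Then $d=r_1-r_2=(f-r_2)-(f-r_1)\in I$. If $d\neq 0$, then $d$ is a nonzero element of $I$, so by the Gr\"obner basis characterization $\operatorname{LT}(d)\in\operatorname{LT}(I)$. But $\operatorname{LT}(d)$ is one of the terms occurring with nonzero coefficient in $r_1$ or in $r_2$, and neither of these polynomials has any term in $\operatorname{LT}(I)$ since both are in normal form. This contradiction forces $d=0$, i.e.\ $r_1=r_2$; in particular $r$ depends on neither the choice of $G$ nor the order in which reductions are performed.

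The step requiring the most care is existence, and it rests on two points. First, termination of the reduction is not automatic for an arbitrary generating set; it is guaranteed here precisely by the well-ordering property built into the definition of $\preccurlyeq$, which rules out an infinite descending sequence of leading terms. Second, the conclusion that the remainder has all terms outside $\operatorname{LT}(I)$ --- rather than merely outside $\langle \operatorname{LT}(g_i)\rangle$ for some unstructured generators --- uses the Gr\"obner property $\operatorname{LT}(I)=\langle \operatorname{LT}(g_1),\dots,\operatorname{LT}(g_t)\rangle$ in an essential way; with a non-Gr\"obner generating set the remainder need not be in normal form and need not be unique. Uniqueness, by contrast, is immediate once the Gr\"obner basis characterization of $\operatorname{LT}(I)$ is in hand.
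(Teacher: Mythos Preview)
Your argument is correct and is the standard textbook proof (division by a Gr\"obner basis for existence, leading-term contradiction for uniqueness). Note, however, that the paper does not actually prove this proposition: it appears in Appendix~A as a background fact from computational commutative algebra, stated without proof and attributed to standard references such as \cite{cox|little|oshea:15} and \cite{kreuzer|robbiano:08}. So there is no ``paper's own proof'' to compare against; what you have written is exactly the argument one finds in those references.

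One minor point worth tightening in your existence argument: you reduce an arbitrary reducible term $x^a$ of $p$, then invoke well-ordering of ``leading terms'' for termination. As written, the quantity that is supposed to decrease is not quite identified. The cleanest fix is either to always reduce the \emph{largest} reducible term of $p$ (so that this maximum strictly decreases at each step), or to run the usual division algorithm that processes $\operatorname{LT}(p)$ first and moves irreducible leading terms to the remainder. Either choice makes the well-ordering appeal go through verbatim; the uniqueness half needs no change.
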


The normal form of a polynomial $f$ can be computed from $f$ and a Gr\"obner basis of $I$ with respect to $ \preccurlyeq$. The normal form solves the problem of ideal membership: a polynomial $f$ belongs to the ideal $I$ if and only if $\operatorname{NF}(f)=0$.

\begin{definition}
Given an ideal  $I \subset R={\mathbb K}[x_1, \ldots, x_m]$, the $l$-th {\em elimination ideal} $I_l$ is
the ideal of ${\mathbb K}[x_{l+1},\ldots, x_m]$ defined by
\[
I_l = I \cap  {\mathbb K} [x_{l+1}, \ldots, x_m] .
\]
\end{definition}

The elimination ideal is the generalization of the Gaussian elimination to the polynomial case. Computing a Gr\"obner basis of $I$ with respect to a special term-ordering, we can actually compute the ideal $I_l$. The {\em lex term-ordering} on ${\mathbb T}^m$ is defined by: for $a \ne b$, $x^a  \preccurlyeq x^b$ if and only if the leftmost nonzero entry of the vector difference $b-a \in {\mathbb Z}^m$ is positive.

\begin{theorem}
Let $I \subset R={\mathbb K}[x_1, \ldots, x_m]$ be an ideal and let $G$ be a Gr\"obner basis of $I$ with respect to lex term-ordering with $x_1 > x_2 > \cdots > x-n$. Then, for every $l=0, \ldots ,m$, the set
\[
G_l = G \cap {\mathbb K}[x_{l+1}, \ldots , x_m]
\]
is a Gr\"obner basis of the elimination ideal $I_l$.
\end{theorem}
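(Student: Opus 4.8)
The plan is to verify the defining property of a Gr\"obner basis directly: I must show that the leading terms of $G_l$ generate the leading-term ideal $\operatorname{LT}(I_l)$ inside ${\mathbb K}[x_{l+1}, \ldots, x_m]$. First I would dispatch the easy inclusions. Since $G_l = G \cap {\mathbb K}[x_{l+1}, \ldots, x_m] \subseteq G \subseteq I$ and, by construction, every element of $G_l$ lies in ${\mathbb K}[x_{l+1}, \ldots, x_m]$, we get $G_l \subseteq I \cap {\mathbb K}[x_{l+1}, \ldots, x_m] = I_l$. Consequently the inclusion $\langle \operatorname{LT}(g) : g \in G_l \rangle \subseteq \operatorname{LT}(I_l)$ is immediate, and the whole content of the theorem is concentrated in the reverse inclusion.

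For the reverse inclusion I would take an arbitrary nonzero $f \in I_l$ and aim to show that $\operatorname{LT}(f)$ is divisible by $\operatorname{LT}(g)$ for some $g \in G_l$. Because $f \in I$ and $G$ is a Gr\"obner basis of $I$, the leading term $\operatorname{LT}(f)$ is divisible by $\operatorname{LT}(g)$ for at least one $g \in G$. The task then reduces to showing that this particular $g$ actually belongs to $G_l$, i.e.\ that $g \in {\mathbb K}[x_{l+1}, \ldots, x_m]$; once that is established, $\operatorname{LT}(f) \in \langle \operatorname{LT}(g) : g \in G_l \rangle$ follows, and since $f$ was arbitrary we obtain $\operatorname{LT}(I_l) \subseteq \langle \operatorname{LT}(g) : g \in G_l \rangle$.

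The crux of the argument, and the step I expect to carry everything, is the special behavior of the lex ordering with $x_1 > \cdots > x_m$. Since $f$ involves only the variables $x_{l+1}, \ldots, x_m$, so does its leading term; and as $\operatorname{LT}(g)$ divides $\operatorname{LT}(f)$, the term $\operatorname{LT}(g)$ is likewise free of $x_1, \ldots, x_l$. I would then invoke the defining feature of the lex order: any term containing one of $x_1, \ldots, x_l$ is lex-larger than any term built solely from $x_{l+1}, \ldots, x_m$. Hence, if the leading (and therefore lex-maximal) term of $g$ avoids $x_1, \ldots, x_l$, \emph{every} term of $g$ must avoid them as well, forcing $g \in {\mathbb K}[x_{l+1}, \ldots, x_m]$ and thus $g \in G_l$. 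The only real obstacle is making this lex-comparison precise; everything else is routine bookkeeping with the two inclusions, after which the equality $\operatorname{LT}(I_l) = \langle \operatorname{LT}(g) : g \in G_l \rangle$ establishes that $G_l$ is a Gr\"obner basis of $I_l$.
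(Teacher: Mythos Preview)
The paper does not supply a proof of this theorem: it is stated in Appendix~A as a standard fact from Computational Commutative Algebra, with the reader referred to \cite{cox|little|oshea:15} and \cite{kreuzer|robbiano:08}. Your argument is precisely the classical proof of the Elimination Theorem found in those references, and it is correct; the key step you identify---that under lex with $x_1 > \cdots > x_m$, a polynomial whose leading term lies in ${\mathbb K}[x_{l+1},\ldots,x_m]$ must itself lie in ${\mathbb K}[x_{l+1},\ldots,x_m]$---is exactly the property of lex that makes elimination work.
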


\newpage

\section*{Appendix B. Indicator functions of fractions used in the paper}

\noindent Fraction $A$ in Example 2, first fraction in \cite{eendebak:sito} for the $5^3$ design:
 {\footnotesize{
 \begin{multline*}
 F_A=\frac 1 5 (
   1 + (5 \omega_3 +5 \omega_2 +5 \omega_1 -2)    X_1^4  X_2^4  X_3^4+
  (-5 \omega_3 -5 \omega_2 -5 \omega_1 -9)    X_1^2  X_2^4  X_3^4+ \\
  (-4 \omega_1 +1)    X_1  X_2^4  X_3^4+
  (5 \omega_3 +5 \omega_1 -3)    X_1^4  X_2^3  X_3^4+
  (-5 \omega_3 -5 \omega_1 -8)    X_1^3  X_2^3  X_3^4+ \\
  (-4 \omega_1 +1)    X_1  X_2^3  X_3^4+
  (-5 \omega_3 -5 \omega_2 -8 \omega_1 -3)    X_1^4  X_2^2  X_3^4+
  (5 \omega_3 -4)    X_1^3  X_2^2  X_3^4+ \\
  (5 \omega_2 +5 \omega_1 -3)    X_1^2  X_2^2  X_3^4+
  (-5 \omega_3 -7 \omega_1 -2)    X_1^4  X_2  X_3^4+
  (-3 \omega_2 +2 \omega_1 +2)    X_1^3  X_2  X_3^4+\\
  (5 \omega_3 -3 \omega_1 +2)    X_1^2  X_2  X_3^4+
  (-2 \omega_1 +13)    X_1  X_2  X_3^4+
  (-5 \omega_3 -5 \omega_1 -8)    X_1^4  X_2^4  X_3^3+\\
  (5 \omega_3 -3 \omega_1 +2)    X_1^3  X_2^4  X_3^3+
  (5 \omega_1 -4)    X_1  X_2^4  X_3^3+
  (5 \omega_3 -4)    X_1^4  X_2^3  X_3^3+\\
  (-5 \omega_3 -7)    X_1^3  X_2^3  X_3^3+
  (-4 \omega_1 +1)    X_1^2  X_2^3  X_3^3+
  (-3 \omega_2 +2 \omega_1 +2)    X_1^4  X_2^2  X_3^3+\\
  (-5 \omega_2 -7 \omega_1 -2)    X_1^3  X_2^2  X_3^3+
  (-2 \omega_1 +13)    X_1^2  X_2^2  X_3^3+
  (5 \omega_2 -3 \omega_1 +2)    X_1  X_2^2  X_3^3+\\
  (5 \omega_2 +5 \omega_1 -3)    X_1^3  X_2  X_3^3+
  (-4 \omega_1 +1)    X_1^2  X_2  X_3^3+
  (-5 \omega_2 -5 \omega_1 -8)    X_1  X_2  X_3^3+\\
  (5 \omega_2 +5 \omega_1 -3)    X_1^4  X_2^4  X_3^2+
  (-4 \omega_1 +1)    X_1^3  X_2^4  X_3^2+
  (-5 \omega_2 -5 \omega_1 -8)    X_1^2  X_2^4  X_3^2+\\
  (5 \omega_3 -3 \omega_1 +2)    X_1^4  X_2^3  X_3^2+
  (-2 \omega_1 +13)    X_1^3  X_2^3  X_3^2+
  (-2 \omega_3 +3 \omega_2 +3 \omega_1 +3)    X_1^2  X_2^3  X_3^2+ \\
  (-5 \omega_3 -5 \omega_2 -8 \omega_1 -3)    X_1  X_2^3  X_3^2+
  (-4 \omega_1 +1)    X_1^3  X_2^2  X_3^2+
  (-5 \omega_2 -7)    X_1^2  X_2^2  X_3^2+\\
  (5 \omega_2 -4)    X_1  X_2^2  X_3^2+
  (-5 \omega_3 -5 \omega_2 -5 \omega_1 -9)    X_1^4  X_2  X_3^2+
  (5 \omega_2 -3 \omega_1 +2)    X_1^2  X_2  X_3^2+ \\
  (5 \omega_3 +5 \omega_1 -3)    X_1  X_2  X_3^2+
  (-2 \omega_1 +13)    X_1^4  X_2^4  X_3+
  (5 \omega_2 -3 \omega_1 +2)    X_1^3  X_2^4  X_3+\\
  (-5 \omega_3 -5 \omega_2 -8 \omega_1 -3)    X_1^2  X_2^4  X_3+
  (5 \omega_3 -2 \omega_2 +3 \omega_1 +3)    X_1  X_2^4  X_3+
  (-5 \omega_2 -5 \omega_1 -8)    X_1^3  X_2^3  X_3+\\
  (5 \omega_2 -4)    X_1^2  X_2^3  X_3+
  (-3 \omega_2 +2 \omega_1 +2)    X_1  X_2^3  X_3+
  (-4 \omega_1 +1)    X_1^4  X_2^2  X_3+\\
  (5 \omega_3 +5 \omega_1 -3)    X_1^2  X_2^2  X_3+
  (-5 \omega_3 -5 \omega_1 -8)    X_1  X_2^2  X_3+
  (-4 \omega_1 +1)    X_1^4  X_2  X_3+\\
  (5 \omega_1 -4)    X_1^3  X_2  X_3+
  (-5 \omega_1 -7)    X_1  X_2  X_3)
 \end{multline*}}}

\noindent Fraction B in Example 2, second fraction in \cite{eendebak:sito} for the $5^3$ design:  \\
{\footnotesize{
\begin{equation*}
F_B=\frac 1 5 \left(1+ X_1 X_2 X_3^4 +X_1^2 X_2^2 X_3^3+X_1^3 X_2^3 X_3^2+ X_1^4 X_2^4 X_3\right)= \frac 1 5 \sum_{i=0}^4 \left( X_1 X_2 X_3^4  \right)^{i}
\end{equation*}
}}

\noindent Fraction C in Example 2, other fraction for the $5^3$ design: \\
{\footnotesize{
 \begin{multline*}
 F_C=\frac 1 {25} \left(
5 +(  \omega_4 +\omega_3 +3\omega_1)   X_1 X_2^4 X_3^4+(  \omega_4 +2\omega_2 +2\omega_1)   X_1 X_2^3 X_3^4+\right.\\
(  2\omega_3 +\omega_2 +2\omega_1)   X_1 X_2^2 X_3^4+
(  \omega_4 +\omega_1 +3)   X_1 X_2 X_3^4+(  \omega_4 +2\omega_2 +2\omega_1)   X_1^2 X_2^4 X_3^3+\\
(  \omega_3 +3\omega_2 +\omega_1)   X_1^2 X_2^3 X_3^3+
(  \omega_3 +\omega_2 +3)   X_1^2 X_2^2 X_3^3+(  2\omega_4 +\omega_3 +2\omega_2)   X_1^2 X_2 X_3^3+\\
(  2\omega_3 +\omega_2 +2\omega_1)   X_1^3 X_2^4 X_3^2+(  \omega_3 +\omega_2 +3)   X_1^3 X_2^3 X_3^2+
(  \omega_4 +3\omega_3 +\omega_2)   X_1^3 X_2^2 X_3^2+\\
(  2\omega_4 +2\omega_3 +\omega_1)   X_1^3 X_2 X_3^2+
(  \omega_4 +\omega_1 +3)   X_1^4 X_2^4 X_3+(  2\omega_4 +\omega_3 +2\omega_2)   X_1^4 X_2^3 X_3+\\
\left .(  2\omega_4 +2\omega_3 +\omega_1)   X_1^4 X_2^2 X_3+(  3\omega_4 +\omega_2 +\omega_1   X_1^4 X_2 X_3 \right)
\end{multline*}}}

\noindent Fraction D, fraction for the $5^5$ design in Example 5:
\\
{\footnotesize{
 \begin{multline*}
 F_D=\frac 1 {125} \left(
5+\omega_2X_1^4X_2^4X_3^4+(\omega_3 +\omega_2)X_1^3X_2^4X_3^4+(\omega_3 +\omega_1 +1)X_1^2X_2^4X_3^4+  \right. \\
(\omega_4 +\omega_3 +\omega_2 +1)X_1X_2^4X_3^4+(\omega_2 +2\omega_1 +1)X_1^4X_2^3X_3^4+(\omega_4 +\omega_3 +1)X_1^3X_2^3X_3^4+\\
(\omega_4 +1)X_1^2X_2^3X_3^4+X_1X_2^3X_3^4+(\omega_3 +\omega_2 +2)X_1^4X_2^2X_3^4+\\
(\omega_3 +\omega_1 +1)X_1^3X_2^2X_3^4+(\omega_2 +1)X_1^2X_2^2X_3^4+\omega_4X_1X_2^2X_3^4+\\
\omega_4X_1^4X_2X_3^4+(\omega_3 +\omega_1)X_1^3X_2X_3^4+(\omega_4 +\omega_1 +1)X_1^2X_2X_3^4+ \\
(\omega_4 +\omega_2 +\omega_1 +1)X_1X_2X_3^4+(\omega_4 +1)X_1^4X_2^4X_3^3+(\omega_4 +\omega_1 +2)X_1^3X_2^4X_3^3+\\
\omega_3X_1^2X_2^4X_3^3+(\omega_2 +\omega_1 +1)X_1X_2^4X_3^3+(\omega_2 +\omega_1 +1)X_1^4X_2^3X_3^3+ \\
\omega_4X_1^3X_2^3X_3^3+(\omega_4 +\omega_3 +\omega_1 +1)X_1^2X_2^3X_3^3+(\omega_4 +\omega_1)X_1X_2^3X_3^3+\\
(\omega_3 +\omega_2 +1)X_1^4X_2^2X_3^3+\omega_3X_1^3X_2^2X_3^3+(\omega_4 +\omega_3 +\omega_2 +1)X_1^2X_2^2X_3^3+ \\
(\omega_2 +\omega_1)X_1X_2^2X_3^3+(\omega_3 +1)X_1^4X_2X_3^3+(\omega_4 +2\omega_2 +1)X_1^3X_2X_3^3+\\
X_1^2X_2X_3^3+(\omega_3 +\omega_1 +1)X_1X_2X_3^3+(\omega_4 +\omega_2 +1)X_1^4X_2^4X_3^2+X_1^3X_2^4X_3^2+\\
(2\omega_3 +\omega_1 +1)X_1^2X_2^4X_3^2+(\omega_2 +1)X_1X_2^4X_3^2+(\omega_4 +\omega_3)X_1^4X_2^3X_3^2+\\
(\omega_3 +\omega_2 +\omega_1 +1)X_1^3X_2^3X_3^2+\omega_2X_1^2X_2^3X_3^2+(\omega_3 +\omega_2 +1)X_1X_2^3X_3^2+\\
(\omega_4 +\omega_1)X_1^4X_2^2X_3^2+(\omega_4 +\omega_2 +\omega_1 +1)X_1^3X_2^2X_3^2+ \omega_1X_1^2X_2^2X_3^2+\\
(\omega_4 +\omega_3 +1)X_1X_2^2X_3^2+(\omega_4 +\omega_3 +1)X_1^4X_2X_3^2+  \omega_2X_1^3X_2X_3^2+\\
(\omega_4 +\omega_1 +2)X_1^2X_2X_3^2+(\omega_1 +1)X_1X_2X_3^2+ (\omega_4 +\omega_3 +\omega_1 +1)X_1^4X_2^4X_3+\\
(\omega_4 +\omega_1 +1)X_1^3X_2^4X_3+(\omega_4 +\omega_2)X_1^2X_2^4X_3+\omega_1X_1X_2^4X_3+\\
\omega_1X_1^4X_2^3X_3+(\omega_3 +1)X_1^3X_2^3X_3+(\omega_4 +\omega_2 +1)X_1^2X_2^3X_3+ \\
(\omega_3 +\omega_2 +2)X_1X_2^3X_3+X_1^4X_2^2X_3+(\omega_1 +1)X_1^3X_2^2X_3+ \\
(\omega_2 +\omega_1 +1)X_1^2X_2^2X_3+(2\omega_4 +\omega_3 +1)X_1X_2^2X_3+(\omega_3 +\omega_2 +\omega_1 +1)X_1^4X_2X_3+ \\
(\omega_4 +\omega_2 +1)X_1^3X_2X_3+(\omega_3 +\omega_2)X_1^2X_2X_3+\omega_3X_1X_2X_3+\\
(\omega_4 +\omega_3 +3\omega_1)X_1^4X_2^4X_4^4X_5^4+(\omega_4 +2\omega_2 +2\omega_1)X_1^3X_2^4X_4^4X_5^4+\\
(2\omega_3 +\omega_2 +2\omega_1)X_1^2X_2^4X_4^4X_5^4+ (\omega_4 +\omega_1 +3)X_1X_2^4X_4^4X_5^4+\\
(\omega_3 +\omega_2 +\omega_1)X_1^4X_3^4X_4^4X_5^4+X_1^3X_3^4X_4^4X_5^4+(\omega_4 +\omega_2 +\omega_1 +1)X_1^2X_3^4X_4^4X_5^4+\\
(\omega_2 +1)X_1X_3^4X_4^4X_5^4+ (\omega_4 +\omega_3 +\omega_2 +\omega_1)X_1^4X_2^3X_3^4X_4^4X_5^4+\\
(\omega_3 +\omega_1 +1)X_1^3X_2^3X_3^4X_4^4X_5^4+(\omega_4 +\omega_3)X_1^2X_2^3X_3^4X_4^4X_5^4+\omega_4X_1X_2^3X_3^4X_4^4X_5^4+\\
(\omega_2 +3\omega_1 +1)X_2^2X_3^4X_4^4X_5^4+(\omega_3 +\omega_1 +1)X_1^4X_2X_3^4X_4^4X_5^4+\omega_1X_1^3X_2X_3^4X_4^4X_5^4+\\
(2\omega_4 +\omega_2 +\omega_1)X_1^2X_2X_3^4X_4^4X_5^4+(\omega_3 +\omega_1)X_1X_2X_3^4X_4^4X_5^4+ \omega_2X_1^4X_3^3X_4^4X_5^4+\\
(\omega_4 +\omega_1)X_1^3X_3^3X_4^4X_5^4+(\omega_3 +\omega_1 +1)X_1^2X_3^3X_4^4X_5^4+ (\omega_4 +\omega_3 +2\omega_1)X_1X_3^3X_4^4X_5^4+\\
(\omega_4 +\omega_1 +1)X_1^4X_2^3X_3^3X_4^4X_5^4+\omega_3X_1^3X_2^3X_3^3X_4^4X_5^4+(\omega_2 +2\omega_1 +1)X_1^2X_2^3X_3^3X_4^4X_5^4+\\
(\omega_2 +\omega_1)X_1X_2^3X_3^3X_4^4X_5^4+(2\omega_2 +\omega_1 +2)X_2^2X_3^3X_4^4X_5^4+ \omega_3X_1^4X_2X_3^3X_4^4X_5^4+\\
(\omega_4 +\omega_3)X_1^3X_2X_3^3X_4^4X_5^4+(\omega_4 +\omega_2 +\omega_1)X_1^2X_2X_3^3X_4^4X_5^4+ \\
(\omega_4 +\omega_3 +\omega_1 +1)X_1X_2X_3^3X_4^4X_5^4+(\omega_3 +2\omega_2 +\omega_1)X_1^4X_3^2X_4^4X_5^4+\\
(\omega_4 +\omega_1 +1)X_1^3X_3^2X_4^4X_5^4+ (\omega_1 +1)X_1^2X_3^2X_4^4X_5^4+\omega_1X_1X_3^2X_4^4X_5^4+\\
(\omega_4 +\omega_1)X_1^4X_2^3X_3^2X_4^4X_5^4+(2\omega_3 +\omega_1 +1)X_1^3X_2^3X_3^2X_4^4X_5^4+\omega_1X_1^2X_2^3X_3^2X_4^4X_5^4+\\
(\omega_4 +\omega_2 +\omega_1)X_1X_2^3X_3^2X_4^4X_5^4+(2\omega_4 +2\omega_3 +\omega_1)X_2^2X_3^2X_4^4X_5^4+\\
(\omega_4 +\omega_2 +\omega_1 +1)X_1^4X_2X_3^2X_4^4X_5^4+(\omega_2 +\omega_1 +1)X_1^3X_2X_3^2X_4^4X_5^4+ (\omega_3 +1)X_1^2X_2X_3^2X_4^4X_5^4+
\end{multline*}}}
{\footnotesize{
 \begin{multline*}
\omega_2X_1X_2X_3^2X_4^4X_5^4+(\omega_4 +1)X_1^4X_3X_4^4X_5^4+ (\omega_4 +\omega_3 +\omega_2 +\omega_1)X_1^3X_3X_4^4X_5^4+\\
\omega_3X_1^2X_3X_4^4X_5^4+(\omega_4 +\omega_3 +\omega_1)X_1X_3X_4^4X_5^4+ X_1^4X_2^3X_3X_4^4X_5^4+\\
(\omega_4 +\omega_2)X_1^3X_2^3X_3X_4^4X_5^4+(\omega_2 +\omega_1 +1)X_1^2X_2^3X_3X_4^4X_5^4+(\omega_3 +\omega_2 +\omega_1 +1)X_1X_2^3X_3X_4^4X_5^4+\\
(\omega_4 +\omega_3 +3\omega_1)X_2^2X_3X_4^4X_5^4+(\omega_1 +1)X_1^4X_2X_3X_4^4X_5^4+ (\omega_2 +2\omega_1 +1)X_1^3X_2X_3X_4^4X_5^4+\\
\omega_4X_1^2X_2X_3X_4^4X_5^4+(\omega_3 +\omega_2 +\omega_1)X_1X_2X_3X_4^4X_5^4+ (\omega_4 +2\omega_2 +2\omega_1)X_1^4X_2^3X_4^3X_5^3+\\
(\omega_3 +3\omega_2 +\omega_1)X_1^3X_2^3X_4^3X_5^3+(\omega_3 +\omega_2 +3)X_1^2X_2^3X_4^3X_5^3+\\
(2\omega_4 +\omega_3 +2\omega_2)X_1X_2^3X_4^3X_5^3+(\omega_2 +1)X_1^4X_3^4X_4^3X_5^3+(2\omega_4 +\omega_2 +\omega_1)X_1^3X_3^4X_4^3X_5^3+\\
\omega_2X_1^2X_3^4X_4^3X_5^3+(\omega_3 +\omega_2 +1)X_1X_3^4X_4^3X_5^3+(2\omega_3 +\omega_2 +2\omega_1)X_2^4X_3^4X_4^3X_5^3+\\
(\omega_1 +1)X_1^4X_2^2X_3^4X_4^3X_5^3+ (\omega_4 +\omega_3 +\omega_2 +1)X_1^3X_2^2X_3^4X_4^3X_5^3+\omega_4X_1^2X_2^2X_3^4X_4^3X_5^3+\\
(\omega_4 +\omega_2 +1)X_1X_2^2X_3^4X_4^3X_5^3+ \omega_2X_1^4X_2X_3^4X_4^3X_5^3+(\omega_3 +\omega_2)X_1^3X_2X_3^4X_4^3X_5^3+\\
(\omega_4 +\omega_3 +\omega_2)X_1^2X_2X_3^4X_4^3X_5^3+(\omega_2 +2\omega_1 +1)X_1X_2X_3^4X_4^3X_5^3+\\
(\omega_4 +\omega_3 +\omega_2 +1)X_1^4X_3^3X_4^3X_5^3+(\omega_4 +\omega_2 +\omega_1)X_1^3X_3^3X_4^3X_5^3+\\
(2\omega_3 +\omega_2 +2\omega_1)X_2^4X_3^4X_4^3X_5^3+(\omega_1 +1)X_1^4X_2^2X_3^4X_4^3X_5^3+(\omega_4 +\omega_3 +\omega_2 +1)X_1^3X_2^2X_3^4X_4^3X_5^3+\\
\omega_4X_1^2X_2^2X_3^4X_4^3X_5^3+(\omega_4 +\omega_2 +1)X_1X_2^2X_3^4X_4^3X_5^3+ \omega_2X_1^4X_2X_3^4X_4^3X_5^3+\\
(\omega_3 +\omega_2)X_1^3X_2X_3^4X_4^3X_5^3+(\omega_4 +\omega_3 +\omega_2)X_1^2X_2X_3^4X_4^3X_5^3+\\
(\omega_2 +2\omega_1 +1)X_1X_2X_3^4X_4^3X_5^3+(\omega_4 +\omega_3 +\omega_2 +1)X_1^4X_3^3X_4^3X_5^3+\\
(\omega_4 +\omega_2 +\omega_1)X_1^3X_3^3X_4^3X_5^3+(\omega_4 +1)X_1^2X_3^3X_4^3X_5^3+X_1X_3^3X_4^3X_5^3+\\
(\omega_4 +3\omega_2 +1)X_2^4X_3^3X_4^3X_5^3+(\omega_4 +2\omega_3 +\omega_2)X_1^4X_2^2X_3^3X_4^3X_5^3+ \\
(\omega_2 +\omega_1 +1)X_1^3X_2^2X_3^3X_4^3X_5^3+(\omega_2 +\omega_1)X_1^2X_2^2X_3^3X_4^3X_5^3+\omega_2X_1X_2^2X_3^3X_4^3X_5^3+\\
(\omega_3 +\omega_1)X_1^4X_2X_3^3X_4^3X_5^3+(\omega_4 +\omega_3 +\omega_2 +\omega_1)X_1^3X_2X_3^3X_4^3X_5^3+\omega_3X_1^2X_2X_3^3X_4^3X_5^3+ \\
(\omega_2 +\omega_1 +1)X_1X_2X_3^3X_4^3X_5^3+\omega_1X_1^4X_3^2X_4^3X_5^3+(\omega_3 +1)X_1^3X_3^2X_4^3X_5^3+ \\
(\omega_3 +\omega_2 +\omega_1)X_1^2X_3^2X_4^3X_5^3+(\omega_4 +\omega_3 +\omega_2 +\omega_1)X_1X_3^2X_4^3X_5^3+(\omega_3 +3\omega_2 +\omega_1)X_2^4X_3^2X_4^3X_5^3+ \\
\omega_3X_1^4X_2^2X_3^2X_4^3X_5^3+(\omega_2 +1)X_1^3X_2^2X_3^2X_4^3X_5^3+(\omega_4 +\omega_2 +\omega_1)X_1^2X_2^2X_3^2X_4^3X_5^3+ \\
(\omega_4 +2\omega_2 +1)X_1X_2^2X_3^2X_4^3X_5^3+(\omega_4 +\omega_2 +1)X_1^4X_2X_3^2X_4^3X_5^3+X_1^3X_2X_3^2X_4^3X_5^3+\\
(\omega_4 +\omega_2 +\omega_1 +1)X_1^2X_2X_3^2X_4^3X_5^3+(\omega_4 +\omega_3)X_1X_2X_3^2X_4^3X_5^3+(\omega_2 +\omega_1 +1)X_1^4X_3X_4^3X_5^3+\\
\omega_4X_1^3X_3X_4^3X_5^3+(\omega_3 +2\omega_2 +\omega_1)X_1^2X_3X_4^3X_5^3+(\omega_3 +\omega_2)X_1X_3X_4^3X_5^3+\\
(2\omega_4 +\omega_2 +2)X_2^4X_3X_4^3X_5^3+(\omega_4 +\omega_3 +\omega_2)X_1^4X_2^2X_3X_4^3X_5^3+\omega_1X_1^3X_2^2X_3X_4^3X_5^3+\\
(\omega_3 +\omega_2 +\omega_1 +1)X_1^2X_2^2X_3X_4^3X_5^3+(\omega_3 +\omega_1)X_1X_2^2X_3X_4^3X_5^3+(\omega_4 +2\omega_2 +1)X_1^4X_2X_3X_4^3X_5^3+\\
(\omega_3 +\omega_2 +1)X_1^3X_2X_3X_4^3X_5^3+(\omega_4 +\omega_2)X_1^2X_2X_3X_4^3X_5^3+\omega_1X_1X_2X_3X_4^3X_5^3+\\
(2\omega_3 +\omega_2 +2\omega_1)X_1^4X_2^2X_4^2X_5^2+(\omega_3 +\omega_2 +3)X_1^3X_2^2X_4^2X_5^2+\\
(\omega_4 +3\omega_3 +\omega_2)X_1^2X_2^2X_4^2X_5^2+(2\omega_4 +2\omega_3 +\omega_1)X_1X_2^2X_4^2X_5^2+(\omega_3 +\omega_2)X_1^4X_3^4X_4^2X_5^2+\\
(\omega_4 +2\omega_3 +\omega_2)X_1^3X_3^4X_4^2X_5^2+\omega_1X_1^2X_3^4X_4^2X_5^2+(\omega_4 +\omega_3 +1)X_1X_3^4X_4^2X_5^2+ \\
\omega_4X_1^4X_2^4X_3^4X_4^2X_5^2+(\omega_3 +\omega_1)X_1^3X_2^4X_3^4X_4^2X_5^2+(\omega_3 +\omega_2 +1)X_1^2X_2^4X_3^4X_4^2X_5^2+ \\
(2\omega_3 +\omega_1 +1)X_1X_2^4X_3^4X_4^2X_5^2+(\omega_4 +\omega_2)X_1^4X_2^3X_3^4X_4^2X_5^2+(\omega_4 +\omega_3 +\omega_2 +1)X_1^3X_2^3X_3^4X_4^2X_5^2+ \\
\omega_4X_1^2X_2^3X_3^4X_4^2X_5^2+(\omega_3 +\omega_2 +\omega_1)X_1X_2^3X_3^4X_4^2X_5^2+(\omega_3 +2\omega_1 +2)X_2X_3^4X_4^2X_5^2+\\
(\omega_4 +\omega_3 +\omega_2 +\omega_1)X_1^4X_3^3X_4^2X_5^2+(\omega_4 +\omega_3 +\omega_2)X_1^3X_3^3X_4^2X_5^2+(\omega_2 +1)X_1^2X_3^3X_4^2X_5^2+ \\
\omega_4X_1X_3^3X_4^2X_5^2+(\omega_2 +\omega_1)X_1^4X_2^4X_3^3X_4^2X_5^2+(\omega_4 +\omega_3 +\omega_1 +1)X_1^3X_2^4X_3^3X_4^2X_5^2+\\
X_1^2X_2^4X_3^3X_4^2X_5^2+(\omega_3 +\omega_1 +1)X_1X_2^4X_3^3X_4^2X_5^2+(2\omega_3 +\omega_1 +1)X_1^4X_2^3X_3^3X_4^2X_5^2+\\
\end{multline*}
{\footnotesize{
 \begin{multline*}
 (\omega_4 +\omega_3 +\omega_1)X_1^3X_2^3X_3^3X_4^2X_5^2+(\omega_3 +1)X_1^2X_2^3X_3^3X_4^2X_5^2+\omega_2X_1X_2^3X_3^3X_4^2X_5^2+\\
(\omega_4 +3\omega_3 +\omega_2)X_2X_3^3X_4^2X_5^2+X_1^4X_3^2X_4^2X_5^2+(\omega_1 +1)X_1^3X_3^2X_4^2X_5^2+\\
(\omega_4 +\omega_3 +\omega_1)X_1^2X_3^2X_4^2X_5^2+(\omega_3 +\omega_2 +\omega_1 +1)X_1X_3^2X_4^2X_5^2+
(\omega_4 +\omega_3 +1)X_1^4X_2^4X_3^2X_4^2X_5^2+\\ \omega_2X_1^3X_2^4X_3^2X_4^2X_5^2+
(\omega_4 +\omega_3 +\omega_2 +\omega_1)X_1^2X_2^4X_3^2X_4^2X_5^2+(\omega_4 +\omega_2)X_1X_2^4X_3^2X_4^2X_5^2+\\
\omega_3X_1^4X_2^3X_3^2X_4^2X_5^2+(\omega_4 +\omega_3)X_1^3X_2^3X_3^2X_4^2X_5^2+(\omega_4 +\omega_3 +1)X_1^2X_2^3X_3^2X_4^2X_5^2+\\
(\omega_3 +2\omega_2 +\omega_1)X_1X_2^3X_3^2X_4^2X_5^2+(3\omega_3 +\omega_1 +1)X_2X_3^2X_4^2X_5^2+(\omega_3 +\omega_2 +1)X_1^4X_3X_4^2X_5^2+\\
\omega_3X_1^3X_3X_4^2X_5^2+(\omega_4 +\omega_3 +2\omega_1)X_1^2X_3X_4^2X_5^2+(\omega_3 +1)X_1X_3X_4^2X_5^2+\\
(2\omega_4 +\omega_3 +1)X_1^4X_2^4X_3X_4^2X_5^2+(\omega_3 +\omega_2 +\omega_1)X_1^3X_2^4X_3X_4^2X_5^2+(\omega_3 +\omega_2)X_1^2X_2^4X_3X_4^2X_5^2+\\
\omega_3X_1X_2^4X_3X_4^2X_5^2+(\omega_3 +\omega_1 +1)X_1^4X_2^3X_3X_4^2X_5^2+\omega_1X_1^3X_2^3X_3X_4^2X_5^2+\\
(\omega_3 +\omega_2 +\omega_1 +1)X_1^2X_2^3X_3X_4^2X_5^2+(\omega_4 +1)X_1X_2^3X_3X_4^2X_5^2+(2\omega_4 +\omega_3 +2\omega_2)X_2X_3X_4^2X_5^2+\\
(\omega_4 +\omega_1 +3)X_1^4X_2X_4X_5+(2\omega_4 +\omega_3 +2\omega_2)X_1^3X_2X_4X_5+(2\omega_4 +2\omega_3 +\omega_1)X_1^2X_2X_4X_5+\\
(3\omega_4 +\omega_2 +\omega_1)X_1X_2X_4X_5+(\omega_4 +\omega_2 +\omega_1)X_1^4X_3^4X_4X_5+\omega_2X_1^3X_3^4X_4X_5+\\
(\omega_4 +\omega_3 +\omega_2 +\omega_1)X_1^2X_3^4X_4X_5+(\omega_1 +1)X_1X_3^4X_4X_5+(\omega_4 +\omega_3 +\omega_2)X_1^4X_2^4X_3^4X_4X_5+\\
\omega_1X_1^3X_2^4X_3^4X_4X_5+(2\omega_4 +\omega_3 +1)X_1^2X_2^4X_3^4X_4X_5+(\omega_4 +1)X_1X_2^4X_3^4X_4X_5+\\
(3\omega_4 +\omega_2 +\omega_1)X_2^3X_3^4X_4X_5+(\omega_4 +\omega_3 +\omega_2 +1)X_1^4X_2^2X_3^4X_4X_5+\\
(\omega_4 +\omega_3 +1)X_1^3X_2^2X_3^4X_4X_5+(\omega_3 +\omega_1)X_1^2X_2^2X_3^4X_4X_5+X_1X_2^2X_3^4X_4X_5+\\
\omega_4X_1^4X_3^3X_4X_5+(\omega_4 +1)X_1^3X_3^3X_4X_5+(\omega_4 +\omega_1 +1)X_1^2X_3^3X_4X_5+(\omega_4 +2\omega_3 +\omega_2)X_1X_3^3X_4X_5+\\
\omega_3X_1^4X_2^4X_3^3X_4X_5+(\omega_2 +1)X_1^3X_2^4X_3^3X_4X_5+(\omega_4 +\omega_3 +1)X_1^2X_2^4X_3^3X_4X_5+\\
(\omega_4 +\omega_3 +\omega_1 +1)X_1X_2^4X_3^3X_4X_5+(\omega_4 +2\omega_2 +2\omega_1)X_2^3X_3^3X_4X_5+(\omega_4 +\omega_3 +\omega_1)X_1^4X_2^2X_3^3X_4X_5+\\
\omega_4X_1^3X_2^2X_3^3X_4X_5+(\omega_4 +2\omega_2 +1)X_1^2X_2^2X_3^3X_4X_5+(\omega_4 +\omega_1)X_1X_2^2X_3^3X_4X_5+\\
(2\omega_4 +\omega_2 +\omega_1)X_1^4X_3^2X_4X_5+(\omega_4 +\omega_2 +1)X_1^3X_3^2X_4X_5+(\omega_4 +\omega_1)X_1^2X_3^2X_4X_5+\\
\omega_3X_1X_3^2X_4X_5+(\omega_4 +\omega_2 +\omega_1 +1)X_1^4X_2^4X_3^2X_4X_5+(\omega_4 +\omega_3 +\\
\omega_1)X_1^3X_2^4X_3^2X_4X_5+(\omega_2 +\omega_1)X_1^2X_2^4X_3^2X_4X_5+\omega_2X_1X_2^4X_3^2X_4X_5+\\
(\omega_4 +2\omega_3 +2)X_2^3X_3^2X_4X_5+(\omega_4 +\omega_3)X_1^4X_2^2X_3^2X_4X_5+(2\omega_4 +\omega_3 +1)X_1^3X_2^2X_3^2X_4X_5+\\
\omega_2X_1^2X_2^2X_3^2X_4X_5+(\omega_4 +\omega_1 +1)X_1X_2^2X_3^2X_4X_5+(\omega_3 +1)X_1^4X_3X_4X_5+\\
(\omega_4 +\omega_3 +\omega_1 +1)X_1^3X_3X_4X_5+X_1^2X_3X_4X_5+(\omega_4 +\omega_3 +\omega_2)X_1X_3X_4X_5+\\
(\omega_4 +\omega_2)X_1^4X_2^4X_3X_4X_5+(\omega_4 +\omega_3 +2\omega_1)X_1^3X_2^4X_3X_4X_5+\omega_4X_1^2X_2^4X_3X_4X_5+\\
(\omega_4 +\omega_2 +1)X_1X_2^4X_3X_4X_5+(3\omega_4 +\omega_3 +1)X_2^3X_3X_4X_5+\omega_1X_1^4X_2^2X_3X_4X_5+\\
\left.
(\omega_2 +\omega_1)X_1^3X_2^2X_3X_4X_5+(\omega_4 +\omega_2 +1)X_1^2X_2^2X_3X_4X_5+(\omega_4 +\omega_3 +\omega_2 +\omega_1)X_1X_2^2X_3X_4X_5 \right)
\end{multline*}}}

\section*{Acknowledgements}

We thank Anna Bigatti (Universit\`a di Genova) for her valuable help in using {\tt CoCoA}. This work is partially funded by INdAM (National Institute for Higher Mathematics) through a GNAMPA-INdAM Project 2017. This research is original and has a financial support of the Universit\`a del Piemonte Orientale.

%

\begin{thebibliography}{10}

\bibitem{CoCoA-5}
J.~Abbott, A.~M. Bigatti, and G.~Lagorio.
\newblock {CoCoA-5}: a system for doing {C}omputations in {C}ommutative
  {A}lgebra.
\newblock Available at \texttt{http://cocoa.dima.unige.it}, 2015.

\bibitem{abbott:02}
John Abbott.
\newblock Sparse squares of polynomials.
\newblock {\em Math. Comp.}, 71(237):407--413, 2002.

\bibitem{clark|dean:01}
James~B. Clark and Angela~M. Dean.
\newblock Equivalence of fractional factorial designs.
\newblock {\em Statist. Sinica}, 11(2):537--547, 2001.

\bibitem{corless|fillion:13}
Robert~M. Corless and Nicolas Fillion.
\newblock {\em A graduate introduction to numerical methods}.
\newblock Springer, 2013.

\bibitem{cox|little|oshea:15}
David Cox, John Little, and Donal O'Shea.
\newblock {\em Ideals, Varieties, and Algorithms: An Introduction to
  Computational Algebraic Geometry and Commutative Algebra}.
\newblock Springer, 2015.

\bibitem{deanetal:15}
Angela Dean, Max Morris, John Stufken, and Derek Bingham.
\newblock {\em Handbook of Design and Analysis of Experiments}.
\newblock CRC Press, 2015.

\bibitem{eendebak:sito}
Pieter Eendebak.
\newblock Complete series of non-isomorphic orthogonal arrays.
\newblock \url{http://pietereendebak.nl/oapage/}, 2017.
\newblock Accessed: 2017-03-30.

\bibitem{fontana|pistone|rogantin:00}
Roberto Fontana, Giovanni Pistone, and Maria~Piera Rogantin.
\newblock Classification of two-level factorial fractions.
\newblock {\em J. Statist. Plann. Inference}, 87(1):149--172, 2000.

\bibitem{fontana|rapallo|rogantin:16}
Roberto Fontana, Fabio Rapallo, and Maria~Piera Rogantin.
\newblock Aberration in qualitative multilevel designs.
\newblock {\em J. Statist. Plann. Inference}, 174:1--10, 2016.

\bibitem{gibilisco|etal:10}
Paolo Gibilisco, Eva Riccomagno, Maria~Piera Rogantin, and Henry~P. Wynn.
\newblock Algebraic and geometric methods in statistics.
\newblock In {\em Algebraic and geometric methods in statistics}, pages 1--24.
  Cambridge Univ. Press, Cambridge, 2010.

\bibitem{gromping:16}
Ulrike Gr{\"o}mping and Rosemary~A. Bailey.
\newblock Regular fractions of factorial arrays.
\newblock In Joachim Kunert, Christine~H. M{\"u}ller, and Anthony~C. Atkinson,
  editors, {\em m{OD}a 11 - Advances in {M}odel-{O}riented {D}esign and
  {A}nalysis: Proceedings of the 11th International Workshop in
  {M}odel-{O}riented {D}esign and {A}nalysis held in Hamminkeln, Germany, June
  12-17, 2016}, pages 143--151. Springer, Cham, 2016.

\bibitem{hedayat:99}
A.~Samad Hedayat, Neil James~Alexander Sloane, and John Stufken.
\newblock {\em Orthogonal Arrays: Theory and Applications}.
\newblock Springer, New York, 1999.

\bibitem{katsaounis:12}
Tena~I. Katsaounis.
\newblock Equivalence of factorial designs with qualitative and quantitative
  factors.
\newblock {\em J. Statist. Plann. Inference}, 142(1):79--85, 2012.

\bibitem{katsaounis|dean:08}
Tera~I. Katsaounis and Angela~M. Dean.
\newblock A survey and evaluation of methods for determination of combinatorial
  equivalence of factorial designs.
\newblock {\em J. Statist. Plann. Inference}, 138(1), 2008.

\bibitem{katsaounis:13}
Tera~I. Katsaounis, Angela~M. Dean, and Bradley Jones.
\newblock On equivalence of fractional factorial designs based on singular
  value decomposition.
\newblock {\em J. Statist. Plann. Inference}, 143(11):1950--1953, 2013.

\bibitem{keedwell|denes:15}
A.~Donald Keedwell and J{\'o}zsef D{\'e}nes.
\newblock {\em Latin squares and their applications}.
\newblock Elsevier/North-Holland, second edition, 2015.

\bibitem{kreuzer|robbiano:08}
Martin Kreuzer and Lorenzo Robbiano.
\newblock {\em Computational Commutative Algebra 1}.
\newblock Springer, 2008.

\bibitem{ma|fang|lin:01}
Chang-Xing Ma, Kai-Tai Fang, and Dennis K.~J. Lin.
\newblock On the isomorphism of fractional factorial designs.
\newblock {\em J. Complexity}, 17(1):86--97, 2001.

\bibitem{pang|liu:11}
Fang Pang and Min-Qian Liu.
\newblock Geometric isomorphism check for symmetric factorial designs.
\newblock {\em J. Complexity}, 27(5):441--448, 2011.

\bibitem{pistone|rogantin:08}
Giovanni Pistone and Maria-Piera Rogantin.
\newblock Indicator function and complex coding for mixed fractional factorial
  designs.
\newblock {\em J. Statist. Plann. Inference}, 138(3):787--802, 2008.

\bibitem{schoen:10}
Eric~D. Schoen, Pieter~T. Eendebak, and Man~V.M. Nguyen.
\newblock Complete enumeration of pure-level and mixed-level orthogonal arrays.
\newblock {\em Journal of Combinatorial Designs}, 18(2):123--140, 2010.

\bibitem{tang|xu:14}
Yu~Tang and Hongquan Xu.
\newblock Permuting regular fractional factorial designs for screening
  quantitative factors.
\newblock {\em Biometrika}, 101(2):333--350, 2014.

\bibitem{wu|hamada:00}
C.F.Jeff Wu and Michael Hamada.
\newblock {\em Experiments: Planning, Analysis, and Parameter Design
  Optimization}.
\newblock Wiley, 2000.

\end{thebibliography}

\end{document}